\newcommand{\real}{\mathbb{R}}
\newtheorem{theorem}{Theorem}
\newtheorem{lemma}{Lemma}
\newtheorem{remark}{Remark}
\newcommand*{\Scale}[2][4]{\scalebox{#1}{$#2$}}
\renewcommand*\env@matrix[1][\arraystretch]{%
  \edef\arraystretch{#1}%
  \hskip -\arraycolsep
  \let\@ifnextchar\new@ifnextchar
  \array{*\c@MaxMatrixCols c}}
\newcommand\MyBox[2]{
  \fbox{\lower0.75cm
    \vbox to .6cm{\vfil
      \hbox to 1cm{\hfil\parbox{0.8cm}{#1\\#2}\hfil}
      \vfil}%
  }%
}
\newcolumntype{M}[1]{>{\centering\arraybackslash}m{#1}}
\newcolumntype{N}{@{}m{0pt}@{}}
\title[Effect size estimation]{Selection-adjusted inference: an application to confidence intervals for $cis$-eQTL effect sizes}
\author[Panigrahi {\it et al.}]{Snigdha Panigrahi}
\address{Department of Statistics,
		 Stanford University,
         CA, USA.}
\email{snigdha@stanford.edu}
\author{Junjie Zhu}
\address{Department of Electrical Engineering,
	      Stanford University,
         CA, USA.}
\email{jjzhu@stanford.edu}
\author[Panigrahi et al.]{Chiara Sabatti}
\address{Department of Biomedical Data Science and  Department of Statistics,
		 Stanford University,
         CA, USA.}
\email{sabatti@stanford.edu}
\begin{document}

\begin{abstract}
The goal of eQTL studies is to identify the genetic variants that influence the expression levels of the genes in an organism. High throughput technology has made such studies possible: in a given tissue sample, it enables us to quantify the expression levels of approximately 20,000 genes and to record the  alleles present at millions of genetic polymorphisms. While obtaining this data is relatively cheap once a specimen is at hand, obtaining human tissue remains a costly endeavor:  eQTL studies continue to be based on relatively small sample sizes, with this limitation particularly  serious for tissues as brain, liver, etc.---often the organs of most immediate medical relevance. 

Given the high dimensional nature of these datasets and the large number of hypotheses tested, the scientific community has adopted early on multiplicity adjustment procedures. These testing procedures  primarily control the  false discoveries rate for the identification of genetic variants with influence on the expression levels. In contrast, a problem that has not received much attention to date is that of providing estimates of the effect sizes associated with these variants, in a way that accounts for the considerable amount of selection. Yet, given the difficulty of procuring additional samples, this challenge is of practical importance.

We illustrate in this work how the recently developed  conditional inference approach can be deployed to obtain confidence intervals for the eQTL effect sizes with reliable coverage. In addition to interval estimates, we also provide a point estimate that approximately counters the effect of selection bias to calibrate the strength of discovered associations. The procedure we propose is based on a randomized hierarchical strategy. Such a strategy has a two-fold contribution: one, it reflects the selection steps typically adopted in state of the art investigations and two, it introduces the use of randomness instead of data splitting to maximize the use of available data. Analysis of the GTEx Liver dataset (v6) suggests that naively obtained confidence intervals would likely not cover the true values of effect sizes and that the number of local genetic polymorphisms influencing the  expression level of genes might be underestimated.
\end{abstract}
\maketitle

\section{Introduction}
\label{introduction}

The goal of an eQTL (expression quantitative trait loci) study is to identify  the genetic variants  that regulate the expression of genes in different biological contexts and quantify their effects. Using statistical terminology,  the outcome variables (typically on the order of 20,000) are molecular measurements of the gene expression and the predictors are genotypes  for single nucleotide polymorphisms (SNP), typically on the order of ~1,000,000). The variants that are discovered to regulate gene expression are referred as eVariants and careful estimation of their effect sizes is often deferred to follow-up studies.

One commonly studied sub-type of eVariants are those referred to as  {\it cis}-eQTL: the DNA variants in the neighborhood of a gene that influence its expression directly.  The majority of eQTL investigations have focused on detecting these {\it cis}-variants owing to their relative simple biological interpretation as well as the fact that restricting attention to this subset of gene and variant pairs reduces the number of tested hypotheses and leads to improved power. Still, even when concentrating on cis regulation, eQTL studies face a formidable multiplicity problem and also, a subsequent winner's curse during effect size estimation of discovered associations, with approximately 20,000 genes and an average of 7500 variants in each {\em cis} region.

Since the first studies \citep{brem2002genetic, schadt2003genetics, cheung2005mapping}, the eQTL research community has recognized the False Discovery Rate (FDR) as a relevant global error rate and adopted corresponding controlling strategies. As the density of SNP genotyping increased over time, it became apparent that naive application of FDR controlling strategies \citep{benjamini_fdr, storey_positive_2003} might lead to excessive false discoveries. Many correlated variants may all be correctly identified as associated with changes in gene expression, even when there is really only one causal effect. This abundance of  discoveries, all corresponding to one true signal, artificially increases the denominator of the false discovery proportion, leaving room for some additional false findings. To address this difficulty, more recent works 
\cite{gtex2015genotype, ongen2015fast, GTEx17} adopt a hierarchical strategy, which controls the FDR of eGene discoveries. Firstly, for each gene one tests the null hypothesis of no association with any local variants and the p-values corresponding to these tests are passed to an FDR controlling procedure.  Secondly, for those genes for which this null is rejected (eGenes), scientists proceed to try to identify which among the {\em cis} variants have an effect (eVariants). To this end both marginal testing \citep{BetS17} and multivariate regression models \cite{GTEx17} have been used to report the {\em cis} variants with significant association with eGenes.

Once eVariants have been detected, the logical next step is to attempt to estimate their effect sizes; and given the scarcity of biological samples, it is tempting to do so using the data at hand. However, since these discoveries have been selected out of a large number of possible associations, naive estimators based on the same data used for selection would result in inaccurate estimates. Indeed, this is a situation similar to that of GWAS, where this problem of ``winner's curse'' has been noted before \citep{zollner2007overcoming, zhong2008bias}, with some proposed solutions aiming at reducing bias (see also \cite{cohen1989two}). Other approaches to this general challenge include an empirical Bayesian approach in \cite{efron2011tweedie, wager2014geometric}, simultaneous inference methods in \cite{berk2013valid}, and  a differential privacy take on data-adaptivity in \cite{dwork2015preserving}.
One clear way out, of course, is offered by the classical concept of data-splitting, see \cite{cox}.  However in settings where the sample size is already small, reserving a hold-out data set for inference  is beyond affordability. This is often the case in human eQTL studies: for tissues other then the easily accessible blood and skin, the  relations between 20,000 genes and million of SNPs is typically studied with  a number of specimens in the hundreds at best. The same difficulties that lead to small sample sizes, make it unrealistic to simply defer the task of estimating effects to a new dataset.

In this work, we explore the potential for this problem of recent developments in the statistical literature: specifically,  the notion of conditional inference after selection and the power of randomization strategies. We offer a 
pipeline for the identification of eVariants and estimation of their effect sizes that mimics the hierarchical analysis, representing the state of the art in eQTL studies. 
Comparing the results of our pipeline with alternative strategies in simulations and real data analysis helps us understand the severity of the challenges of inference after selection in the context of eQTL. Our contribution supports the investigators in their choice of optimal use of the limited samples available in a single study, balancing the need to efficiently discover relations with that of making inference on the effect sizes of the discoveries. 

\subsection{Approach: a randomized conditional perspective}
Our methods build upon a conditional inference perspective that was introduced in \cite{exact_lasso, optimal_inference}.  The central idea is that, to counter selection bias, inference on the parameters (effect sizes, in the case of eQTL analysis) should be based on an adjusted likelihood, obtained by conditioning the data generative model upon the selection event. Conditioning has the effect of discarding the information in the data  used in selecting the eVariants, so that effect sizes are estimated on ``unused data.'' Other contributions that employ this perspective include \cite{yekutieli2012adjusted,weinstein2013selection,lee2014exact, tibshirani2016exact,selective_bayesian,reid2017post}.
In addition, we capitalize on the observation that it is possible to generalize data-splitting in a manner that allows one to make a more efficient use of the information in the sample by introducing some randomization at the selection stage \citep{randomized_response}. 

We provide a pipeline to construct confidence intervals and point estimates for the discovered effect sizes. 
The $100\cdot(1-\alpha)\%$  selection-adjusted confidence intervals are such that the probability with which each of them does not cover its target population parameter is at  most $\alpha$, conditional on selection. We will report eVariants when the selection-adjusted confidence interval for their effect size does not cover zero. This guarantee is first described in \cite{exact_lasso}, where it is called \textit{selective false-coverage} rate control. 
As a point estimator, we employ the maximum likelihood estimator (MLE) calculated from the conditional law, introduced in \cite{selective_bayesian}. We call this estimator the \textit{selection-adjusted MLE}: this serves as a quantification of the strengths of the discovered associations.

We conclude this introduction with examples of the challenges presented by selection and of how the conditional inference approach addresses them: this gives us the opportunity to introduce terminology as well as to discuss the concepts of randomization, target of inference and statistical model in this context. The rest of the paper deals with the more complicated hierarchical setting of eQTL research, which requires novel methodological results and  is organized as follows.  Section 2 describes the randomized selection pipeline that we employ as the eQTL identification strategy; section 3 presents our proposal for selection adjusted inference; section 4 contains the results of a simulation study and the concluding section 5 presents the analysis of data from GTEx. 

\subsection{Motivating examples}
\label{motivation:examples}
While the substantial contribution of the present work is to design a selection and inference pipeline that adapts to the hierarchical strategy typically adopted in eQTL studies, we start by considering a couple of ``cartoon'' examples that 
 illustrate the effect of selection bias and the overall sprit of our strategy, bypassing the complications associated to eGene selection. We focus on one gene, and imagine that the entire goal of the study is to find which of its {\em cis} variants influence its expression and with what effect sizes.  We consider two selection strategies: the first (1) consists in choosing the DNA variant that is most strongly associated with the gene (see \cite{ongen2015fast}); the second (2) uses the LASSO to identify a set of variants. In both cases, we are interested in inferring the effect sizes of the selected variants.

Introducing some notation, let $y\in \real^n$ be the response and $X\in \real^{n\times p}$ be a matrix collecting the values of predictors. Assume, without any loss of generality, that  $X$ is both centered and scaled to have columns of norm $1$. Strategy (1) identifies 
 the variant with the greatest marginal t-statistic. Let this variant correspond to the  $j_0$-th column of $X$, denoted as $X_{j_0}$ which satisfies
\[|X_{j_0}^T y /\sigma | \geq |X_{j}^T y /\sigma| \text{ for } j\in \{1,2, \cdots, p\} \setminus j_0;\]
$\sigma$ being the noise-variance in the outcome variable, here assumed known.
For strategy (2), we consider a LASSO selection with a small ridge penalty $\epsilon >0$  (for numerical stability) given by
\[\text{minimize}_{\beta} \frac{1}{2}\|y-X\beta\|_2^2 + \lambda \|\beta\|_1 + \frac{\epsilon}{2}\|\beta\|_2^2.\]
We choose the tuning parameter as $\lambda= \mathbb{E}[\|X^T \Psi\|_{\infty}], \; \Psi \sim \mathcal{N}(0,\sigma^2 I)$, a theoretical value advocated in \cite{negahban2009unified} and known to recover the true support asymptotically. 

We will be introducing randomization schemes for both strategies, which perturb the selection with the addition of some gaussian noise $\omega\in \real^p\sim \mathcal{N}(0, \tau^2 I_p)$. 
 Specifically, the randomized version of strategy (1) leads to the identification of variant $j_0$  when it satisfies:
\[|X_{j_0}^T y /\sigma + \omega_{j_0}| \geq |X_{j}^T y /\sigma + \omega_{j}| \text{ for } j\in \{1,2, \cdots, p\} \setminus j_0,\]
and the randomized version of strategy (2) solves the following  modified optimization problem   \[\text{minimize}_{\beta} \frac{1}{2}\|y-X\beta\|_2^2 -\omega^T \beta + \lambda \|\beta\|_1 + \frac{\epsilon}{2}\|\beta\|_2^2.\]

If we now consider the problem of inference following these selections, we note that we need to (a) specify a model for the data with respect to which evaluate the properties of estimators and (b) we need to formally identify the target parameters.
Of course, (a) is challenging in a context of model selection, where by definition we do not know what is the ``true'' model. Nevertheless it makes sense to work with what we might term the ``full model,'' where the mean of response variable $Y$  is parametrized by a $\real^n$ vector $\mu,$ without specifying a relation with $X$ and $Y\sim \mathcal{N}(\mu, \sigma^2 I)$: this is also the choice made in \cite{berk2013valid, exact_lasso}. 

While this full model allows us to talk precisely about the distribution of $Y,$ the target of inference (b) in both these examples  is not $\mu$, but depends on the outcome of selection. This {\em adaptive target} can be described as the projection of $\mu$ on the space spanned by the selected variables and can be interpreted as the best linear parameter which is identified through selection. For the marginal example, the adaptive target is given by $X_{j_0}^T\mu$.  Denoting $X_E$ as the selected sub-matrix, this target for the LASSO is given by $(X_E^T X_E)^{-1} X_E^T \mu$, the partial regression coefficients that are obtained by fitting a linear model to the selected set of variables $E$.

 We are now ready to explore via simulations the performance of three different inferential methods. The first (i) is  ``Vanilla'' inference that ignores selection and relies on the estimates for the adaptive targets that one would use if these were specified in advance: in the first example, this is simply the largest t-statistic, and for the second example, this is the least squared estimator with the corresponding intervals centered around these point estimates. Secondly (ii) we consider the  adjusted inference described in \cite{exact_lasso} (``Lee et al.'') which corrects for selection bias based on a screening without randomization (note that  \cite{exact_lasso} does not provide a selection-adjusted point estimate, but gives a recipe for confidence intervals). The third approach (iii) is in the spirit of the methods we develop in the rest of the paper: we condition on the outcome of randomized selection and provide adjusted confidence intervals and point estimates (``Proposed'').

In our simulations,  $X$ is fixed and a response vector $y\in \real^{n}$ is generated in each round from $Y \sim \mathcal{N}(0, I_n)$, independently of $X$. The perturbation $\omega\in \real^{p}$ is generated from $\Omega \sim \mathcal{N}(0, \tau^2 I_p); \;\tau^2 = 0.5$, independently of $Y$. We simulate $100$ such instances and compare the three inferential procedures in terms of coverage, length of the confidence intervals and risk of the point estimates. The risk metric we use is the averaged squared error deviation of the point estimates from the respective adaptive targets, described above for the two selection strategies. The results are summarized in Figure \ref{motivation}.

As expected, the unadjusted intervals based on ``Vanilla" inference fall way short of coverage. The adjusted intervals, both in the form described in \cite{exact_lasso} and in the present work, based on randomization,  achieve the target coverage. These two adjusted methods, however, differ in terms of interval length: while the ``Proposed'' intervals are $1.5$ times longer than the unadjusted intervals (a price we pay for selection), they are much shorter than the exact intervals post non-randomized screenings in \cite{exact_lasso}. This is the advantage of randomization at the selection stage: there is more ``left-over'' information available at the time of inference. The risks of the ``Vanilla" point-estimates post a non-randomized screening is also seen to be higher than the adjusted MLE, considered as a point estimate for effect sizes in this paper.
\begin{figure}[h]
\begin{center}
\centerline{\includegraphics[width=\linewidth]{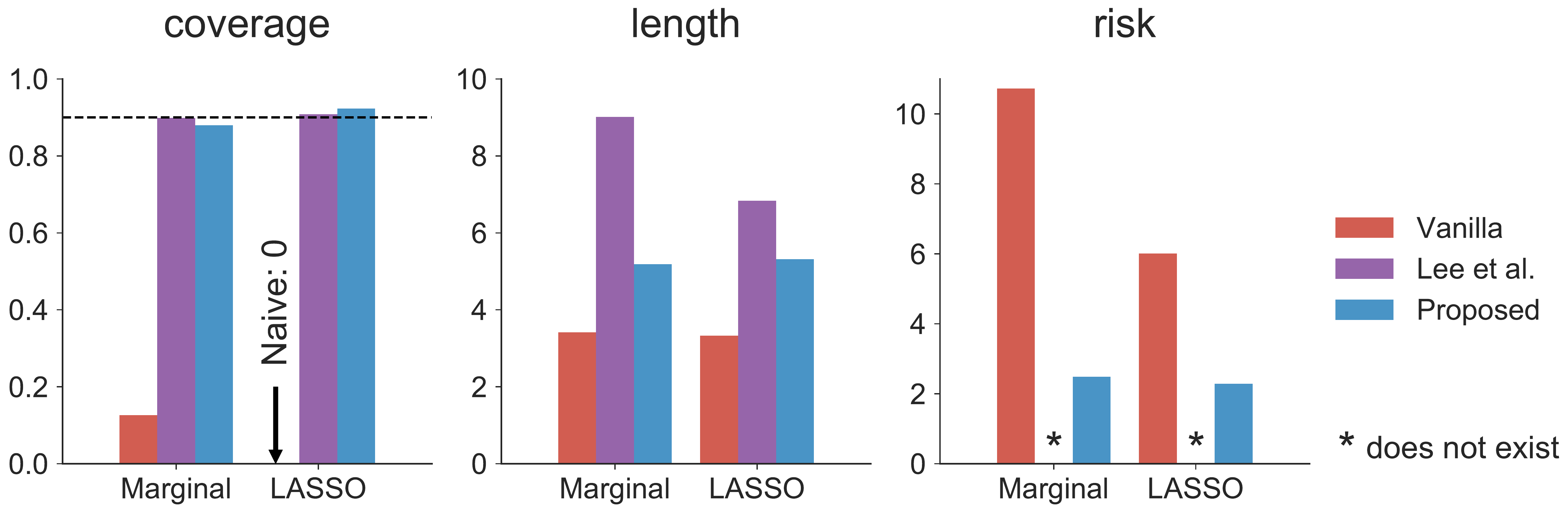}}
\caption{\small{The left most panel  compares the coverages of the confidence intervals, where the dotted black line represents the target $90\%$ coverage. The central panel in the plot gives an indication of the power of statistical inference through lengths of these intervals. Finally, the right most panel compares the empirical risks of the adjusted and unadjusted point estimates with respect to a squared error loss. 
On the $x$ axes, ``MARGINAL'' denotes the inference following selection strategy (1):  a marginal screening analysis based on a marginal t-statistic; and ``LASSO''  indicates the inferential results following selection strategy (2):  screenings by LASSO. The performances of the three inferential strategies compared are indicated using different colors, as in the legend. 
}}
\label{motivation}
\end{center}
\vskip -0.4in
\end{figure} 
In the interest of clarity, we note that while the comparisons in Figure  \ref{motivation} are meaningful on average, the selections for the non randomized and randomized versions of strategies (1) and (2) are not the same,  so that while ``Vanilla" and ``Lee'' are working with the same targets of inference in each realization of the simulations, ``Proposed'' might have slightly different targets.

In the rest of the paper, we will develop a pipeline for the analysis of eQTL data that takes advantage of randomization in the selection step to preserve more information for the inferential stage.

\section{Randomized hierarchical screening to identify interesting eQTL effects}
\label{eQTL:selection}

We start by introducing the notation that we will be using throughout. 
Our data will include measurements on the expression levels of $G$ genes and $V$ variants in $n$ subjects.  We denote the outcome variables using $Y$: $Y^{(g)} \in \real^n$  is the  vector of gene-expression levels for gene  $g\in \{1,2, \cdots ,G\}$. Let  $V_g$ be the number of variants measured within a defined {\it cis}-window around gene $g$: we indicate the matrix of the corresponding  potential explanatory variables with $X^{(g)} \in \real^{n \times V_g}$. We  assume, without  loss of generality, that $X^{(g)}$ is centered and the columns are scaled to have norm $1$. Denote the variance in gene-expression response $Y^{(g)}$ by $\sigma^2$. 
In this section we are going to outline a randomized selection procedure that mimics the hierarchical strategy often adopted in the literature \citep{GTEx17} and specify the derived ``target of inference.'' 

\subsection{A hierarchical randomized selection}
\label{hierarchical:sel}
Following the practice in eQTL studies, we want to first identify a set ${\cal G}$ of genes that appear likely to be {\em cis}-regulated, and then, for each of the genes $g\in {\cal G}$ identify a set of potential e-variants $E^{(g)}$ belonging to the {\em cis} region and appearing to have an effect on the expression of  gene $g$.
In order to both use the entire sample $n$ to guide our selection and, at the same time, reserve enough information for the following inference on the effect size, we explore the potential of a selection strategy that includes randomization in both these stages. The ``cartoon'' examples with which we concluded the previous section illustrate the advantages of randomization. For a more comprehensive discussion, please see \cite{randomized_response} and  \cite{dwork2015preserving}, which place this into the context of data re-use.

\noindent\textbf{Stage-I: Randomized selection of eGenes} \quad To discover promising genes, we test the global nulls that the expression of gene $g$ is not influenced by any of the cis-variants $V^{(g)}$: {$\cal G$} collects the set of genes for which we reject this null controlling FDR at level q. The p-values for these global null hypotheses are calculated with randomization. \\
\noindent\textbf{Step 1}.\quad  Compute for each gene $g \in  \{1,2, \cdots ,G\}$, a univariate t-test statistic based on marginal correlation of local variant $X^{(g)}_j, j\in \{1,2,\cdots, V_g\}$ with gene expression $Y^{(g)}$ and added Gaussian randomization $\omega_j^{(g)}$. The perturbed t-statistic (z-statistic for a known $\sigma$) is given by
\begin{equation}
\label{rand:T:stat}
T^{(g)}_j = {X^{(g)}_j}^T y^{(g)}/ \sigma + \omega_j^{(g)},
\end{equation}
where $\omega_j^{(g)}$ is a realization of a  Gaussian random variable $\Omega_j^{(g)}\sim\mathcal{N}(0, \gamma^2)$ with $\gamma^2$ controlling the amount of perturbation. Further, perturbations $\Omega_j^{(g)}$ are independent for $j \in \{1,2,\cdots, V_g\}$ and also, independent across all genes $g \in \{1, 2, \cdots, G\}$.
For settings where $\sigma$ is unknown, we use a marginal estimate of $\sigma$; see Remark \ref{unknown:sigma}.\\
\noindent\textbf{Step 2}.\quad  Compute a global p-value based on Bonferroni $\tilde{p}^{(g)} = V_g  p^{(g)}_{(1)}$ 
where
\[p^{(g)}_j = 2\cdot\left(1-\Phi\left(|T^{(g)}_j|/\sqrt{1 +\gamma^2}\right)\right) \text{ for } j\in \{1,2,\cdots, V_g\},\;g\in \{1,2, \cdots ,G\}.\]
\noindent\textbf{Step 3}.\quad  We apply a Benjamini-Hochberg (BH-q) procedure at level $q$ to the global p-values 
$\{\tilde{p}^{(1)}, \tilde{p}^{(2)}, \cdots, \tilde{p}^{(G)}\}.$ Based on the rejection set of BH, we report $K_0$ eGenes with the number of rejections calculated as
\[K_0 = \max_{k\in \{1,2,\cdots, G\}} \left\{\tilde{p}_{(j)}\leq \frac{j}{G}q \text{ for at most } k \text{ many p-values}\right\}.\] This set of identified genes is denoted as $\cal G$.
The analogous non-randomized selection procedure is based on t-statistics in \eqref{rand:T:stat} with no added perturbation. \\

\noindent\textbf{Stage-II: Randomized selection of potential eVariants} \quad The second stage identifies promising variants  $E^{(g)}$ for each of the eGenes $g$ in $\mathcal{G}$. 
We use a randomized version of penalized regression \citep{elastic_net}, where the $\ell_1$ penalty induces sparsity (selection) and a small  $\ell_2$ penalty  is used to regularize the problem. The set $E^{(g)}$  is identified by solving: \begin{equation}
\label{rand:lasso}
\text{mimimize}_{\beta} \frac{1}{2}\|y^{(g)} -\tilde{X}^{(g)}\beta\|_2^2 -{\zeta^{(g)}}^T \beta +\lambda \|\beta\|_1+\frac{\epsilon}{2}\|\beta\|_2^2,
\end{equation}
where $\tilde{X}^{(g)}\in \real^{n \times p_g}, \;p_g\leq V_g$ indicate a pruned set of {\em cis} variants.
The value of $\epsilon$ is small, with the only goal of ensuring the existence of a well-defined solution, while $\lambda$  is set to be the theoretical value considered in Section \ref{motivation:examples}. In the same spirit as the randomized screening of eGenes, $\zeta^{(g)}$ is based on a Gaussian variable $Z \sim \mathcal{N}(0, \tau^2 I)$.  

Note that $E^{(g)}$ indicates the set of active variants for the problem (\ref{rand:lasso}) for gene $g$: this is the result of our selection step. However, the final results of our analysis will consist of a set  $\mathcal{Q}^{(g)}\subset E^{(g)}$, corresponding to the   subset of variants with significant selection-adjusted p-values at a chosen level of selective Type-I control $\alpha$. 
\begin{remark}
We noted that problem (\ref{rand:lasso})
is defined on pruned set of variants. The variants in the {\em cis} region of a gene can be highly correlated in the sample.  This makes it hard for any multi-regression analysis like the LASSO to recover the set of variables truly associated
and hence, pruning becomes essential; prior works \cite{hastie2000gene, reid2016sparse} have recognized these challenges. We give details of an unsupervised pruning of variants using a hierarchical clustering scheme, based on the empirical correlations between variants as a distance measure in Appendix \ref{prune}. 
\end{remark}
\begin{remark}
\label{unknown:sigma}
In real data analysis, of course, the variance of gene expressions is not known: we use a plug-in estimate. For the marginal screening step, we consider a marginal estimate of variance to compute a t-statistic for the association between variant $j$ and outcome. For the randomized LASSO screening, we use an estimate of $\sigma$ from a refitted least squared regression post the LASSO. 
\end{remark}
\subsection{Model and adaptive target}
\label{adaptive:target:model}
To proceed with inference post a hierarchical screening, we assume a full Gaussian model for gene expression, as in the motivating examples: \begin{equation}
\label{saturated:model}
Y^{(g)} = \mu^{(g)} + \epsilon^{(g)},\; \epsilon^{(g)} \sim \mathcal{N}(0, \sigma^2 I).
\end{equation}
The generative law in this framework parametrizes the Gaussian mean as $\mu^{(g)}\in \real^n, g \in \{1,2,\cdots, G\}$. In addition, we assume that errors $\epsilon^{(g)}$ are independent across genes. 
\begin{remark}
Gene expression measurements and genotype data often are affected by  confounding factors, including gender, demography, platform etc. For the purposes of this paper we assume  that the data has undergone preprocessing that eliminates the effects of hidden confounding factors and measured covariates.  This allows us to assume that the noise is approximately independent across genes. Details on how these confounding factors are regressed out from the real data we analyze is provided in  Appendix \ref{dataprocessing}.
\end{remark}

Having described a model, we define the adaptive target of inference as the projection of the model parameters onto the space spanned by $E^{(g)}$ for each gene $g$:
\begin{equation} 
\label{projected:target}
{b}_{E^{(g)}} = \left(\tilde{X}_{E^{(g)}}^T \tilde{X}_{E^{(g)}}\right)^{-1} \tilde{X}_{E^{(g)}}^T \mu \in \real^{E^{(g)}}.
\end{equation}
Denoting the $j$-th component of the adaptive partial regression coefficient ${b}_{E^{(g)}} $ as ${b}_{j; E^{(g)}} = e_j^T {b}_{E^{(g)}}$, we note that  unadjusted inference for $b_{j; E^{(g)}}$ would be based on the $j$-th least squared estimator
\begin{equation}
\label{p:LC}
\hat{b}_{j; E^{(g)}} = e_j^T \left(\tilde{X}_{E^{(g)}}^T \tilde{X}_{E^{(g)}}\right)^{-1} \tilde{X}_{E^{(g)}}^T Y^{(g)}.
\end{equation}
Under an independent Gaussian noise model, $\hat{b}_{j; E^{(g)}}$ is the MLE for ${b}_{j; E^{(g)}}$ and naive confidence intervals for the same target are centered around $\hat{b}_{j; E^{(g)}}$ with a length of $2\cdot z_{1-\alpha/2} \cdot (\tilde{X}_{E^{(g)}}^T \tilde{X}_{E^{(g)}})^{-1}_{j,j}$;  $z_{1-\alpha/2}$ being the standard normal quantile. In the next section we will describe alternative point and interval estimates for these quantities that take into account the selection we have described. 

\section{Selection-adjusted inference for eVariants effects}
\label{methods}
We outline a recipe to provide selection-adjusted inference for the adaptive partial regression coefficients in \eqref{projected:target} in a coordinate-wise manner. That is, we provide the steps to get a tractable selection adjusted law: $\mathcal{L}(\hat{b}_{j; E^{(g)}} \lvert g \in \mathcal{G}, \; \hat{E}^{(g)}= E^{(g)})$, the conditional law of the $j$-th component of the least squared estimator in \eqref{p:LC} conditioned upon the event that gene $g$ was screened marginally in Stage-I and $E^{(g)}$ was screened by randomized LASSO in Stage-II.

We start again by listing the notation and terminology we will use. We then give an outline of the steps that allow us to provide a practical selection adjustment for inference, describing them in detail in following sections. 

\begin{itemize}[leftmargin=*]
\item {\bf Data and randomization} \quad At the core of the adjustment for selection is the calculation of a conditional likelihood. Because of the randomization we introduced, the relevant likelihood includes both the original data and the outcome of the randomization. The observed data $y^{(g)}$ and randomizations $(\omega^{(g)}\in \real^{V_g}, \zeta^{(g)}\in \real^{p_g})$ are realizations of the random variables $(Y^{(g)}, \Omega^{(g)}, Z^{(g)})$ respectively; $(\Omega^{(g)}, Z^{(g)})$ represent the random variables for Gaussian randomizations used in both stages of selection. The entire data across all genes is denoted by $(\bm{y}, \bm{\omega}, \bm\zeta)$ where 
\[\bm{y} = \{y^{(g)} : g\in G\}, \; \bm{\omega} =\{\omega^{(g)}: g \in G\}, \; \bm{\zeta}= \{\zeta^{(g)}: g \in G\}.\] 
\item {\bf Parameters in model} \quad In providing coordinate-wise inference, we note that the target parameter in our case is 
\[b_{j;E^{(g)}}= c_j^T \mu; \;c_j = \tilde{X}_{E^{(g)}} (\tilde{X}_{E^{(g)}}^T \tilde{X}_{E^{(g)}})^{-1} e_j. \] Based on \eqref{saturated:model}, there are nuisance parameters that are given by
$n_{j}^{(g)}= (\mu - b_{j;E^{(g)}}c_j/\|c_j\|_2)$. In a nutshell, $n_{j}^{(g)}$ are parameters that we are not interested in inferring about while providing inference for $b_{j;E^{(g)}}$. 
\item {\bf Target and null statistic} \quad A target statistic is a statistic that can be used to make inference on the target parameter ${b}_{j;E^{(g)}}$. Specifically, we will consider the $j$-th coordinate of the least squared estimator, denoted by $\hat{b}_{j;E^{(g)}}$ as the target statistic for target parameter- ${b}_{j;E^{(g)}}$. Note that $\hat{b}_{j;E^{(g)}}$ also, serves as the target statistic for naive (unadjusted) inference for ${b}_{j;E^{(g)}}$. Except now, the law of this target statistic is no longer a Gaussian distribution centered around ${b}_{j;E^{(g)}}$: the goal of the remaining section is to provide a tractable selection adjusted law for $\hat{b}_{j;E^{(g)}}$.

The null statistic according to model \eqref{saturated:model} is a data vector that is orthogonal to ${b}_{j;E^{(g)}}$ based on decomposition
$y^{(g)}= \hat b_{j; E^{(g)}}c_j/\|c_j\|_2+ \mathcal{U}_j^{(g)}$. In model \eqref{saturated:model}, the mean of the null statistic is the nuisance parameter, defined as $n_{j}^{(g)}$. Conditioning out the null statistics in the selection-adjusted law eliminates the nuisance parameters in the model.
\item {\bf Variance of target and null statistic} \quad Finally, denote as $\sigma_{j; E^{(g)}}^2 = (\tilde{X}_{E^{(g)}}^T \tilde{X}_{E^{(g)}})^{-1}_{j,j}$, the variance of $\hat{b}_{j;E^{(g)}}$ and the variance of $\mathcal{U}_j^{(g)}$ as $\Sigma_U^{(g)}$ under \eqref{saturated:model}.
\end{itemize}

\subsection{An outline of conditional inference after randomized selection in eQTL}
 In order to successfully apply conditional inference,   we need to be able to efficiently compute with the selection-adjusted law. To achieve this goal in the eQTL inference problem that we have described, we take the following steps.\\

\label{outline:methods}
\noindent {\bf I. A selection-adjusted law across the genome} \quad The randomized selection $\hat{\mathcal{G}}(\bm{y}, \bm{\omega})$ of eGenes and of promising variants for each eGene $\hat{E}^{(g)}(y^{(g)},\zeta^{(g)})$ define a selection event 
\begin{equation}
\label{coupled:sel}
\left\{(\bm{y}, \bm{\omega}, \bm{\zeta}): \hat{\mathcal{G}}(\bm{y}, \bm{\omega}) = \mathcal{G}, \;\hat{E}^{(g)}(y^{(g)},\zeta^{(g)})= E^{(g)} \text{ for } g \in\mathcal{G}\right\}.
\end{equation}
Conditioning on this  event modifies the  law of data and randomizations as follows:
\begin{equation}\prod_{g\in \mathcal{G}}\mathcal{L}(y^{(g)}; \mu)\times\mathcal{L}(\omega^{(g)}; \gamma)\times\mathcal{L}(\zeta^{(g)}; \tau)1_{\{g \in \mathcal{G},\; \hat{E}^{(g)}= E^{(g)}\}}\label{adjusted}\end{equation}
where $\mathcal{L}(y^{(g)}; \mu)\times \mathcal{L}(\omega^{(g)}; \gamma)\times\mathcal{L}(\zeta^{(g)}; \tau)$ is the unadjusted law of $(Y^{(g)}, \Omega^{(g)}, Z^{(g)})$. 

The above conditional law results from the fact that we infer about target \eqref{projected:target} for a gene $g$ only if it has been screened in the set of eGenes and the set $E^{(g)}$ of variants are chosen by a randomized version of LASSO. We point out to our readers that the selection of gene $g$ as an eGene depends not only on the data specific to gene $g$, but also on data for all other genes. This leads to a complicated selection event, which simplifies to a great extent when we condition additionally on some more information beyond knowing the set of eGenes: $\mathcal{G}$ and the respective active variants $E^{(g)}$, chosen in the next stage. This extra information includes the signs of the t-statistics for the selected genes, the signs of Lasso coefficients, the total number of eGenes $K_0$ etc.
 
\noindent {\bf II. A simplified adjusted law across eGenes}\quad  

We obtain a simplified selection event $\left\{\hat{\mathcal{H}}(y^{(g)}, \omega^{(g)}, \zeta^{(g)}) = \mathcal{H}\right\}$ by conditioning on some additional information together with the selection of eGene $g$ and variants $E^{(g)}$ selected by LASSO. Such an event is easier to handle in the sense that it is described only in terms of data specific to gene $g$ and thereby, allows a decoupling of the joint law under \eqref{adjusted} into an adjusted law for each gene in ${\cal G}$. A full description of the conditioning event, a superset of the event $\{g \in \mathcal{G},\; \hat{E}^{(g)}= E^{(g)}\}$ is available in Section \ref{decoupling}. 

Under \eqref{saturated:model} and Gaussian randomizations, the simplified selection-adjusted density for each eGene $g$ in terms of target and null statistic $(\hat{b}_{j;E^{(g)}}, \mathcal{U}_j^{(g)})$, randomizations $(\omega^{(g)}, \zeta^{(g)})$ and an observed selection $\mathcal{H}$, is given by
\begin{align}
\label{decoupled:law}
&\exp\left(-\dfrac{(\hat{b}_{j;E^{(g)}} -b_{j;E^{(g)}})^2}{2\sigma_{j;E^{(g)}}^2}\right)\cdot\exp\left(-{\|\Sigma_U^{-1/2}(\mathcal{U}_j^{(g)}-n_{j}^{(g)})\|_2^2}\right)\nonumber \\
&\times \exp\left(-{{\|\omega^{(g)}}\|_2^2}/{2\gamma^2}\right)\times \exp\left(-{\|\zeta^{(g)}\|_2^2}/{2\tau^2}\right)\times 1_{\left\{\hat{\mathcal{H}}(y^{(g)}, \omega^{(g)}, \zeta^{(g)}) = \mathcal{H}\right\}}.
\end{align}\\

\noindent{\bf III. Selection-adjusted interval and point estimates}\quad 
Since, we are interested in the selection-adjusted law of the target statistic $\hat{b}_{j;E^{(g)}}$, we marginalize over the randomizations in the joint law in \eqref{decoupled:law} and condition on nuisance statistics $\mathcal{U}_j^{(g)}$ in order to eliminate nuisance parameters $n^{(g)}_j$. This finally, leads to a selection-adjusted density for the target statistic $\hat{b}_{j;E^{(g)}}$, proportional to
\begin{equation}
\label{marginal:law}
\exp\left(-{(\hat{b}_{j;E^{(g)}} -b_{j;E^{(g)}})^2}/{2\sigma_{j;E^{(g)}}^2}\right)\mathcal{P}_{\mathcal{H}}(\hat{b}_{j;E^{(g)}}).
\end{equation}
In this law, $\mathcal{P}_{\mathcal{H}}(t) = \mathbb{P}((\hat{b}_{j; E^{(g)}}, \Omega^{(g)}, Z^{(g)}) \in \mathcal{H}\lvert \hat{b}_{j;E^{(g)}}=t)$ is the conditional probability of selection given $\hat{b}_{j;E^{(g)}}=t$.  This is the density we will use to carry out inference on the effect sizes of eVariant.

\noindent Using the adjusted law of the target statistic to define \[T(\hat{b}_{j;E^{(g)}}; b_{j;E^{(g)}}, \sigma_{j;E^{(g)}}) = \dfrac{\displaystyle\int_{\hat{b}_{j;E^{(g)}}}^{\infty} \exp\left(-{(t -b_{j;E^{(g)}})^2}/{2\sigma_{j;E^{(g)}}^2}\right)  \mathcal{P}_{\mathcal{H}}(t) dt}{\displaystyle\int_{-\infty}^{\infty} \exp\left(-{(t -b_{j;E^{(g)}})^2}/{2\sigma_{j;E^{(g)}}^2}\right)  \mathcal{P}_{\mathcal{H}}(t)dt},\]
an {\bf adjusted (two-sided) p-value} can be computed as:
\[p(\hat{b}_{j;E^{(g)}}; b_{j;E^{(g)}}, \sigma_{j;E^{(g)}}) = 2\cdot\min (T(\hat{b}_{j;E^{(g)}}; b_{j;E^{(g)}}, \sigma_{j;E^{(g)}}), 1- T(\hat{b}_{j;E^{(g)}}; b_{j;E^{(g)}}, \sigma_{j;E^{(g)}})).\] 
{\bf Confidence intervals} for $b_{j;E^{(g)}}$ with target coverage $100(1-\alpha)\%$ are constructed as
\begin{equation}
\label{intervals}
\{b \in \real : p(\hat{b}_{j;E^{(g)}}; b, \sigma_{j;E^{(g)}}) \leq \alpha\}.
\end{equation}
And, a {\bf maximum likelihood estimator} is obtained  solving
\begin{equation}
\label{MLE}
\Scale[0.95]{\underset{{b_{j;E^{(g)}}}}{\text{minimize}}(\hat{b}_{j;E^{(g)}} -b_{j;E^{(g)}})^2/{2\sigma_{j;E^{(g)}}^2}  + \log \displaystyle\int  \exp\left(-{(t -b_{j;E^{(g)}})_2^2}/{2\sigma_{j;E^{(g)}}^2}\right)\mathcal{P}_{\mathcal{H}}(t)dt.}
\end{equation}\\

\noindent {\bf IV. Approximate and tractable inference} \quad  Even though, we derived a selection adjusted law in \eqref{marginal:law}, the term $\mathcal{P}_{\mathcal{H}}(\hat{b}_{j;E^{(g)}})$, the conditional selection probability in \eqref{marginal:law}, lacks a tractable closed form expression.
This contributes to the computational bottleneck in constructing intervals or solving a MLE problem based on the adjusted law of $\hat{b}_{j;E^{(g)}}$. To offer tractable inference based on \eqref{marginal:law}, we provide an approximation for $\mathcal{P}_{\mathcal{H}}(\cdot)$.
 Details on the approximation are given in Section \ref{approximation}. Approximate inference in the form of interval and point estimates is based on plugging $\hat{\mathcal{P}}_{\mathcal{H}}(\cdot)$ in \eqref{intervals} and \eqref{MLE}.

We now make precise the steps involved in obtaining the simplified conditional law (II) and the approximation of the marginal adjusted law (IV).

\subsection{A selection-adjusted law for $\hat b_{j; E^{(g)}}$}
\label{decoupling}
With the unadjusted law of data vector and randomizations 
\[\left\{(\hat b_{j; E^{(g)}}, \mathcal{U}_j^{(g)}, \Omega^{(g)}, Z^{(g)}); \;g\in \mathcal{G}\right\}\] based on a decomposition of the response vector $y^{(g)}$ into target statistic, $\hat b_{j; E^{(g)}}$ and null statistics, $\mathcal{U}_j^{(g)}$ (see in list of terminologies)
as our starting point, we realize that the selection event in \eqref{coupled:sel} modifies this (unadjusted) density by truncating it as 
\begin{eqnarray}
&&\prod \limits_{g \in \mathcal{G}}\exp\left(-\dfrac{(\hat{b}_{j;E^{(g)}} -b_{j;E^{(g)}})^2}{2\sigma_{j;E^{(g)}}^2}-{\|\Sigma_U^{-1/2}(\mathcal{U}_j^{(g)}-n_{j})\|_2^2}-\dfrac{{\|\omega^{(g)}}\|_2^2}{2\gamma^2}-\dfrac{\|\zeta^{(g)}\|_2^2}{2\tau^2}\right)\times \nonumber\\ 
&& \;\;\;\;\;\times 1_{\left\{g\in \hat{\mathcal{G}}(\bm{y}, \bm{\omega})=\mathcal{G}, \;\hat{E}(y^{(g)},\zeta^{(g)})= E^{(g)}\right\}}.
\label{coupled:law}
\end{eqnarray}
By conditioning on some more additional information, we obtain s a simplified selection-adjusted law for $b_{j; E^{(g)}}$ in each eGene $g$. At the core of this simplification is a decoupling of the selection event (with extra information) into an event exclusive to eGene $g$ and an event that is dependent on data from all genes that are not screened in Stage-I. This is formally described in Theorem \ref{separable:selection}. The event with extra information that we condition on is given by 
\begin{equation}
\label{our:selection:event}
\begin{aligned}
&\Big\{g \in \hat{\mathcal{G}}, \; K = K_0,\;   j_{(1)}^{(g)}=  j_0, \;T_{j_{(2)}}^{(g)}= T_0,\;\text{sign}(T^{(g)}_{j_0}) = s_{j_0},\\
&\;\;\;\;\;\hat{E}(y^{(g)},\zeta^{(g)})= E^{(g)}, \text{sign}(\hat\beta_{E^{(g)}}) = s_{E^{(g)}} \Big\}
\end{aligned}
\end{equation} 
where
\begin{itemize}[leftmargin=*]
\item $K_0$ is the number of eGenes or equivalently, the number of rejections determined by BH as
\[K_0 = \max_{k\in \{1,2,\cdots, G\}} \left\{\tilde{p}_{(j)}\leq \frac{j}{G}q \text{ for at most } k \text{ many p-values}\right\},\]
based on ordered (Bonferroni-adjusted) p-values
$\{\tilde{p}_{(1)}, \tilde{p}_{(2)}, \cdots, \tilde{p}_{(G)}\}$. 
\medskip

\item $j_{(1)}^{(g)}= j_0= \arg\min_{j} p^{(g)}_j$ for eGene $g$ is the index in the set $\{1,2,\cdots, V_g \}$ associated with the largest t-statistic in \eqref{rand:T:stat} or correspondingly, the smallest marginal p-value. 
\item $T_{j_{(2)}}^{(g)}= T_0= \max_{j\neq j_{(1)}^{(g)}} |T^{(g)}_j|$
 is the second largest t-statistic in magnitude for gene $g$.
\item $\text{sign}(T^{(g)}_{j_0})= s_{j_0}$ denotes the sign of the largest t-statistics corresponding to the index $j_0$ for each eGene. 
\item $s_{E^{(g)}}= \text{sign}(\hat \beta_{E^{(g)}})$ denotes the signs of the active coefficients from the LASSO.
\end{itemize}

\begin{theorem}
\label{separable:selection}
\emph{\textit{Separability result}}
\quad The selection event in \eqref{our:selection:event} 
decouples as 
\[\{\hat{\mathcal{H}}(y^{(g)}, \omega^{(g)}, \zeta^{(g)})= \mathcal{H}\}\cap \{\hat{\mathcal{J}}(y^{(g')}, \omega^{(g')}, g'\notin \mathcal{G}) =\mathcal{J}\} \text{ where}\]
 $\{\hat{\mathcal{H}}(y^{(g)}, \omega^{(g)}, \zeta^{(g)})= \mathcal{H}\}$ is an event based on data and randomization associated exclusively with eGene $g$ and 
$\{\hat{\mathcal{J}}(y^{(g')}, \omega^{(g')}, g'\notin \mathcal{G}) =\mathcal{J}\}$ is a function of data and randomization for genes that are not rejected by the BH procedure.
\end{theorem}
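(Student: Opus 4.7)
I would prove Theorem \ref{separable:selection} by carefully unpacking the selection event in \eqref{our:selection:event} and using the conditioning on $K = K_0$ as the central structural device. Conditioning on $K_0$ replaces the adaptive BH threshold by the deterministic value $K_0 q / G$, which is what allows the per-gene contributions to the selection event to be isolated from one another.

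The first step is to assemble the gene-$g$ event $\mathcal{H}$. By inspection of \eqref{rand:T:stat} and the KKT conditions for the randomized LASSO in \eqref{rand:lasso}, the conditions $j_{(1)}^{(g)} = j_0$, $T^{(g)}_{j_{(2)}} = T_0$, $\text{sign}(T^{(g)}_{j_0}) = s_{j_0}$, $\hat{E}(y^{(g)}, \zeta^{(g)}) = E^{(g)}$, and $\text{sign}(\hat{\beta}_{E^{(g)}}) = s_{E^{(g)}}$ already involve only $(y^{(g)}, \omega^{(g)}, \zeta^{(g)})$. The one piece that requires work is showing that the global assertion $g \in \hat{\mathcal{G}}$, once $K$ is fixed at $K_0$, reduces to a gene-$g$-only condition. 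Since the BH rule with $K = K_0$ rejects precisely those genes with $\tilde{p}^{(h)} \le K_0 q / G$, and since the Bonferroni definition gives $\tilde{p}^{(g)} = V_g \cdot 2\big(1 - \Phi(|T^{(g)}_{j_0}|/\sqrt{1 + \gamma^2})\big)$, the assertion $g \in \hat{\mathcal{G}}$ is equivalent to $|T^{(g)}_{j_0}| \ge c$ for the explicit threshold $c = \sqrt{1 + \gamma^2}\,\Phi^{-1}\!\big(1 - K_0 q/(2 G V_g)\big)$. Conjoining this with the five conditions above defines $\mathcal{H}$ in $(y^{(g)}, \omega^{(g)}, \zeta^{(g)})$.

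The second step is to isolate $\mathcal{J}$. The remaining content of $K = K_0$ is BH's maximality: for every $k > K_0$, fewer than $k$ of the Bonferroni p-values lie below $kq/G$. Writing $f(k) = \#\{h : \tilde{p}^{(h)} \le kq/G\}$, the $K_0$ selected genes contribute automatically to $f(k)$ for $k \ge K_0$, so the condition $f(k) < k$ is equivalent to $\#\{g' \notin \mathcal{G} : \tilde{p}^{(g')} \le kq/G\} < k - K_0$. Combined with the per-gene constraints $\tilde{p}^{(g')} > K_0 q/G$ for each $g' \notin \mathcal{G}$, this defines $\mathcal{J}$ purely in terms of $\{(y^{(g')}, \omega^{(g')})\}_{g' \notin \mathcal{G}}$. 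Since $\mathcal{H}$ and $\mathcal{J}$ involve disjoint subvectors of the full data--randomization vector, their intersection recovers the content of \eqref{our:selection:event}; the analogous per-gene constraints $\tilde{p}^{(g'')} \le K_0 q/G$ for the remaining selected genes $g'' \in \mathcal{G}\setminus\{g\}$ live on yet other disjoint subvectors, so they factor out of the joint law and contribute only normalizing constants to gene $g$'s marginal.

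The main obstacle is handling BH maximality cleanly. Naively, $K = K_0$ looks like a highly coupled event across all $G$ genes, and one might worry that it entangles gene $g$'s data with that of the remaining selected genes. The key observation that saves the decoupling is that, once $K_0$ is fixed, the selected genes enter $f(k)$ only through the deterministic count $K_0$, so the constraints $f(k) < k$ for $k > K_0$ translate into inequalities on non-selected p-values alone. Verifying this equivalence carefully --- and checking it holds simultaneously for every $k \in \{K_0+1, \ldots, G\}$ --- is the step that requires the most bookkeeping.
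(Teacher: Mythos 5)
Your proof is correct and takes essentially the same route as the paper's: condition on $K = K_0$ to fix the BH threshold at $K_0 q/G$, reduce $g \in \hat{\mathcal{G}}$ to the explicit per-gene threshold $s_{j_0}T^{(g)}_{j_0} \geq \sqrt{1+\gamma^2}\,\Phi^{-1}\bigl(1 - K_0 q/(2 G V_g)\bigr)$ (combined with $\geq |T_0|$ from the conditioning on $j_{(1)}^{(g)}$ and $T^{(g)}_{j_{(2)}}$), and push the residual BH maximality onto the non-selected genes to form $\mathcal{J}$. Your counting formulation $\#\{g' \notin \mathcal{G} : \tilde{p}^{(g')} \le kq/G\} < k - K_0$ for $k > K_0$ is equivalent to the paper's order-statistic condition $\tilde{p}^{(g')}_{(k)} \ge (K_0+k)q/G$, and your explicit remark that the analogous factors for the other selected genes live on disjoint subvectors and contribute only normalizing constants is a point the paper leaves implicit.
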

A proof for Theorem \ref{separable:selection} is included in Appendix \ref{add:cond}. A consequence of the resulting separability is that the contribution from $ \{\hat{\mathcal{J}}(y^{(g')}, \omega^{(g')}, g'\notin \mathcal{G}) =\mathcal{J}\} $ in the selection-adjusted law for $\hat b_{j; E^{(g)}}$ is only that of a constant, due to the assumed independent noise structure across genes. This results in a more simplified conditional law for $(\hat b_{j; E^{(g)}}, \mathcal{U}_j^{(g)}, \Omega^{(g)}, Z^{(g)})$ in \eqref{decoupled:law}.

Finally, we obtain \eqref{marginal:law} from the joint law in \eqref{decoupled:law} by marginalizing over the randomizations and conditioning on nuisance statistics $\mathcal{U}_j^{(g)}$ in order to eliminate nuisance parameters $n^{(g)}$, as provided by the following Lemma.
\begin{lemma}
\label{marginal:law:exp}
The marginal selection-adjusted law of $\hat{b}_{j; E^{(g)}}$, conditioned on nuisance statistics $\mathcal{U}^{(g)}_j= u^{(g)}_j$ is an exponential family with a natural parameter $b_{j; E^{(g)}}$ that is given by
\[\exp\left(-{(\hat{b}_{j;E^{(g)}} -b_{j;E^{(g)}})^2}/{2\sigma_{j;E^{(g)}}^2}\right) \cdot \mathcal{P}_{\mathcal{H}(\mathcal{U}^{(g)}_j)}(\hat{b}_{j;E^{(g)}}); \text{ with }\]
\[\mathcal{P}_{\mathcal{H}(u^{(g)}_j)}(t) = \mathbb{P}\left[(\hat{b}_{j; E^{(g)}},\Omega^{(g)}, Z^{(g)}) \in \mathcal{H}(u^{(g)}_j) \;\lvert \hat{b}_{j; E^{(g)}}=t, \;\mathcal{U}^{(g)}_j = u^{(g)}_j\right]\]
where conditional on $\mathcal{U}_j^{(g)}$, the selection-induced region in terms of the target statistic and randomizations is denoted by
$\{(\hat b_{j; E^{(g)}}, \omega^{(g)}, \zeta^{(g)})\in \mathcal{H}(\mathcal{U}_j^{(g)})\}.$
\end{lemma}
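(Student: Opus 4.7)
The plan is to start from the decoupled joint density in \eqref{decoupled:law}, viewed as a function of $(\hat{b}_{j;E^{(g)}}, \mathcal{U}_j^{(g)}, \omega^{(g)}, \zeta^{(g)})$, and extract the marginal conditional of $\hat{b}_{j;E^{(g)}}$ given $\mathcal{U}_j^{(g)} = u_j^{(g)}$ in two bookkeeping steps: (a) freeze the null statistic so that the nuisance parameter drops out, and (b) integrate out the randomizations to convert the truncation indicator into a conditional selection probability.

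For step (a), conditioning on $\mathcal{U}_j^{(g)} = u_j^{(g)}$ replaces the factor $\exp(-\|\Sigma_U^{-1/2}(\mathcal{U}_j^{(g)} - n_j^{(g)})\|_2^2)$ by a quantity that does not depend on $(\hat{b}_{j;E^{(g)}}, b_{j;E^{(g)}})$; it is absorbed into the normalizer and removes all dependence on $n_j^{(g)}$. Next I would use the decomposition $y^{(g)} = \hat{b}_{j;E^{(g)}}\, c_j/\|c_j\|_2 + \mathcal{U}_j^{(g)}$ recorded in the terminology list: since $\hat{\mathcal{H}}$ is a function of $(y^{(g)}, \omega^{(g)}, \zeta^{(g)})$ and the map $(\hat{b}_{j;E^{(g)}}, \mathcal{U}_j^{(g)}) \mapsto y^{(g)}$ is an affine bijection, freezing $\mathcal{U}_j^{(g)} = u_j^{(g)}$ rewrites the selection indicator as $1\{(\hat{b}_{j;E^{(g)}}, \omega^{(g)}, \zeta^{(g)}) \in \mathcal{H}(u_j^{(g)})\}$ for the region $\mathcal{H}(u_j^{(g)})$ appearing in the lemma.

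For step (b), I would invoke the fact that under the unadjusted generative model $\Omega^{(g)}$ and $Z^{(g)}$ are Gaussian and independent of $(Y^{(g)}, \mathcal{U}_j^{(g)})$, with densities proportional to the $\gamma$- and $\tau$-scaled factors in \eqref{decoupled:law}. Marginalizing the joint density over $(\omega^{(g)}, \zeta^{(g)})$ at fixed $\mathcal{U}_j^{(g)} = u_j^{(g)}$ yields
\begin{equation*}
\int \exp\!\left(-\tfrac{\|\omega^{(g)}\|_2^2}{2\gamma^2}-\tfrac{\|\zeta^{(g)}\|_2^2}{2\tau^2}\right) 1\{(\hat{b}_{j;E^{(g)}}, \omega^{(g)}, \zeta^{(g)}) \in \mathcal{H}(u_j^{(g)})\}\, d\omega^{(g)}\, d\zeta^{(g)},
\end{equation*}
which, after absorbing the Gaussian normalizing constants, is exactly $\mathcal{P}_{\mathcal{H}(u_j^{(g)})}(\hat{b}_{j;E^{(g)}})$ by the definition of conditional probability together with the independence noted above. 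Combining (a) and (b) gives the claimed density proportional to $\exp(-(\hat{b}_{j;E^{(g)}} - b_{j;E^{(g)}})^2/(2\sigma_{j;E^{(g)}}^2))\, \mathcal{P}_{\mathcal{H}(u_j^{(g)})}(\hat{b}_{j;E^{(g)}})$. To read off the exponential-family form with natural parameter $b_{j;E^{(g)}}$, I would expand the quadratic and absorb $\exp(-\hat{b}_{j;E^{(g)}}^2/(2\sigma_{j;E^{(g)}}^2))\, \mathcal{P}_{\mathcal{H}(u_j^{(g)})}(\hat{b}_{j;E^{(g)}})$ into the carrier measure, leaving a sufficient statistic $\hat{b}_{j;E^{(g)}}$ paired with the natural parameter $b_{j;E^{(g)}}/\sigma_{j;E^{(g)}}^2$ and a standard log-normalizer.

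The only place requiring genuine care is the measurability claim underlying step (a): one must verify that the full conditioning event inherited from Theorem \ref{separable:selection} for gene $g$, which collectively defines $\hat{\mathcal{H}}$, depends on the data only through $(y^{(g)}, \omega^{(g)}, \zeta^{(g)})$, so that freezing $\mathcal{U}_j^{(g)}$ cleanly reduces it to a region in $(\hat{b}_{j;E^{(g)}}, \omega^{(g)}, \zeta^{(g)})$-space. Granted this, the rest of the argument is a routine Fubini exchange and recognition of the randomization integral as a conditional probability.
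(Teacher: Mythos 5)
Your proposal is correct and follows essentially the same route as the paper's own proof: start from the truncated joint law of $(\hat b_{j;E^{(g)}},\mathcal{U}_j^{(g)},\omega^{(g)},\zeta^{(g)})$, condition on the null statistic to kill the nuisance factor, and integrate out the Gaussian randomizations so the indicator becomes the conditional selection probability $\mathcal{P}_{\mathcal{H}(u_j^{(g)})}(\cdot)$. Your added remarks on the affine bijection $(\hat b_{j;E^{(g)}},\mathcal{U}_j^{(g)})\mapsto y^{(g)}$ and on reading off the natural parameter are slightly more explicit than the paper but do not change the argument.
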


\subsection{Approximate and tractable selection-adjusted law}
\label{approximation}

A technical hurdle in providing inference about ${b}_{j;E^{(g)}}$ is posed by the intractability of function $\mathcal{P}_{\mathcal{H}}(\cdot)$ in the selection-modified exponential family of distributions. We make use of two key steps in obtaining an approximate selection-adjusted distribution based on a surrogate $\hat{\mathcal{P}}_{\mathcal{H}}(\cdot)$, as outlined below.
\begin{enumerate}[leftmargin=*]
\setlength\itemsep{0.8em}
\item \textit{K.K.T. maps for hierarchical selection}. The selection map associated with the identification of eGene $g$ is given by
\begin{equation}
\label{KKT:egene:map}
\omega^{(g)}_{j_0} =  P^{(g)}_j  \hat{b}_{j; E^{(g)}}  + \eta^{(g)} + q^{(g)}_j,
\end{equation}
where $j_0$ corresponds to the index of the largest t-statistic for eGene $g$ and $P_j^{(g)}, q_j^{(g)}$ are fixed coefficients of the linear map. The screening of eGene $g$ imposes a simple thresholding constraint on $\eta^{(g)}\in \real$ of the form 
$\eta^{(g)}>C$ where $C$ is a threshold determined by the information we conditioned upon in Section \ref{decoupling} and induces a selection region $\mathcal{C}_1$ where
\begin{equation}
\label{sel:constraint:1:main}
\Scale[0.95]{\mathcal{C}_1 = \left\{\eta: \eta \geq \max\left( \sqrt{1+\gamma^2}\cdot\Phi^{-1}\left(1- \frac{K_0}{2V_g G}q\right), |T_{0}^{(g)}|\right) \right\}}.
\end{equation}

Similarly, the selection of set of active variants $E^{(g)}$ with signs $s_{E^{(g)}}$ by randomized LASSO can be written as
\begin{equation}
\label{lasso:evariant:map}
\zeta^{(g)} =  A^{(g)}_{j}  \hat{b}_{j; E^{(g)}}  + B^{(g)}_{j} o^{(g)} + c^{(g)}_{j} 
\end{equation}
where $A^{(g)}_{j}, B^{(g)}_{j} , c^{(g)}_{j} $ are fixed matrices/ vector. Selective constraints on $o^{(g)}\in \real^{p_g}$ that are variables associated with solving the LASSO objective take the form of sign and cube restrictions, which we call $\mathcal{C}_2$ where
\begin{equation}
\label{sel:constraint:2:main}
\mathcal{C}_2 =\{o: \text{sign}(o_{E^{(g)}}) = s_{E^{(g)}}, \; \|o_{-{E^{(g)}}} \|_{\infty} \leq \lambda\}.
\end{equation}
We provide comprehensive derivations of the maps in \eqref{KKT:egene:map} and \eqref{lasso:evariant:map} associated with the hierarchical randomization strategy  and the corresponding selection regions induced by these mining strategies in \eqref{sel:constraint:1:main} and in \eqref{sel:constraint:2:main} in Appendix \ref{KKT}.

\item \textit{Change of variables formulae}. We use the K.K.T. maps in \eqref{KKT:egene:map} and \eqref{lasso:evariant:map} as change of variable formulae from $(\hat{b}_{j; E^{(g)}}, \omega^{(g)} _{j_0}, \zeta^{(g)})$ to $(\hat{b}_{j; E^{(g)}},\eta^{(g)}, o^{(g)})$. The motivation behind this is that the hierarchical selection induces complicated polyhedral geometry on $(\hat{b}_{j; E^{(g)}}, \omega^{(g)} , \zeta^{(g)})$, but the constraints on $(\eta^{(g)}, o^{(g)})$ are simple threshold or sign restrictions, thereby, simplifying the computation of $\mathcal{P}_{\mathcal{H}}(\cdot)$. 
 The selection-adjusted law of $\hat{b}_{j; E^{(g)}}$ induced by the change of variables formulae is provided in Lemma \ref{change:marginal:law:exp} in Appendix \ref{change:variables} as
\[\exp\left(-\dfrac{(\hat{b}_{j;E^{(g)}} -b_{j;E^{(g)}})^2}{2\sigma_{j;E^{(g)}}^2}\right) \cdot \mathcal{P}_{\mathcal{C}_1; \mathcal{C}_2}(\hat{b}_{j;E^{(g)}})\text{ with}\]
$$\mathcal{P}_{\mathcal{C}_1; \mathcal{C}_2}(t)=\mathbb{P}\left[\eta^{(g)}\in \mathcal{C}_1, o^{(g)} \in \mathcal{C}_2 \;\lvert \hat{b}_{j; E^{(g)}}=t\right].$$
Recall that $\mathcal{C}_1$ and $\mathcal{C}_2$ are selection-induced regions described in terms of optimization variables $\eta^{(g)}$ and $o^{(g)}$ respectively.

\item \textit{A Chernoff-based approximation for $ \mathcal{P}_{\mathcal{C}_1; \mathcal{C}_2}(.)$}. For selection regions $\mathcal{C}_1$ and $\mathcal{C}_2$, Theorem \ref{sel:prob} in Appendix \ref{approx:opt} obtains an expression for $\mathcal{P}_{\mathcal{C}_1; \mathcal{C}_2}(\cdot)$ as a function of the statistic of interest $\hat{b}_{j;E^{(g)}}$. 
Finally, an upper bound in Theorem \ref{Chernoff:bound} gives an approximation for the logarithm of $\mathcal{P}_{\mathcal{C}_1; \mathcal{C}_2}(t)$ and is computed as
\begin{equation}
\label{upper:bound:chernoff}
\log C_1(t)-\inf_{\text{sign}(o_{E}) = s_{E^{(g)}}}\left\{ \| A_{E^{(g)}, j}  t + B_{E^{(g)}, j} o_E + c_{E^{(g)}, j} \|_2^2/2\tau^2- \log C_2(o_{E}, t)\right\}
\end{equation}
where 
\[C_1(t)  = \bar\Phi(\{L+ s_{j_0}P^{(g)}_j  t + s_{j_0}q^{(g)}_j\}/\gamma) \text{ and }\]
\[\Scale[0.9]{C_2(o_{E}, t ) = \prod\limits_{k \notin E^{(g)}} \Big\{\Phi\left(\{A_{-E^{(g)},j}^{k}t + B_{-E^{(g)},j}^{k}o_{E}+ \lambda\}/\tau\right) - \Phi\left(\{A_{-E^{(g)},j}^{k} t + B_{-E^{(g)},j}^{k} o_{E} -  \lambda\} /\tau\right)\Big\}.}\]
In the above, $L$ is the lower threshold on $\eta^{(g)}$; $A_{-E^{(g)},j}^{k}$ represents the $k$-th row of $A_{-E^{(g)},j}$ and $B_{-E^{(g)},j}^{k}$ denotes the $k$-th row of $B_{-E^{(g)},j}$. 

In implementations of this scheme, an approximation for the logarithm of $\mathcal{P}_{\mathcal{C}_1; \mathcal{C}_2}(\cdot)$ is finally computed as 
\begin{equation}
 \label{approx:ref}
\log C_1(t) -\inf_{o_{E}}\left\{ \| A_{E^{(g)}, j}  t + B_{E^{(g)}, j} o_E + c_{E^{(g)}, j} \|_2^2/2\tau^2- \log C_2(o_{E}, t) + \mathcal{B}(o_E)\right\}
\end{equation}
where we use a barrier penalty function $\mathcal{B}(o_E)$ to reflect the sign constraints on $o_E$ in the Chernoff bound, \eqref{upper:bound:chernoff}. Specifically, the barrier penalty that we use is $\mathcal{B}(o) = \sum_{k=1}^{E} b(o_k)$ with
\[ b(o) = \begin{cases} 
      \log(1+ 1/s o) & \text{ if } \text{sign}(o) =s \\
      \infty & \text{otherwise.}\\
    \end{cases}
\]

\end{enumerate}

\section{Performance in a {\it cis}-eQTL  study}
\label{simulation}
In this section and the next, we examine the properties of the proposed pipeline 
by testing its performance on data collected in one eQTL study \citep{GTEx17} and comparing its results with those of two other ``benchmark'' procedures.
We first rely on simulations of the outcome variables according to a known model based on genotype data in order to evaluate a number of performance metrics and then turn to the analysis of the real data.

The dataset we analyze is the collection of liver samples in V6p of the GTEx study \citep{GTEx17}. It comprises  $97$ individuals, with genotypes for 7,207,740 variants (these variants are obtained as the output of the default imputation pipeline implemented in \cite{GTEx17}) 
 and expression quantification for $21,819$ genes. Details on data acquisition and pre-processing are provided in the Appendix Section \ref{dataprocessing}.

In studying the performance of the proposed pipeline, we compare it with (1) a slightly modified version of the analysis strategy in the original GTEx paper, followed by a naive construction of the confidence interval for the effect sizes of the identified eVariant ({\em GTEx+vanilla}); (2) a strategy that uses ``out-of-the-box'' selective inference tools and employs no randomization during the selection steps ({\em Bonf+Lasso+Lee et al.}). Specifically,
\begin{description}
\item  {\em GTEx+vanilla (G.V.)}\quad The selection of eGenes is done controlling FDR at level 0.1 with the  BH procedure applied to  p-values for the global null obtained via the  Simes' combination rule \cite{simes1986improved}. We note that this differs from the original GTEx paper in two minor ways: the p-values for the global null for eGene discoveries are obtained via Simes rather than permutations in \cite{GTEx17}, and the adopted FDR controlling procedure at this stage is  BH rather than Storey's procedure referred in the GTEx paper.

In the second stage, the eVariants are selected by an adaptive forward-backward selection:  a variable is added to the regression only if it p-value is lower than the largest p-value for global nulls in the set of discovered eGenes  (with details given in \cite{GTEx17}). 

To estimate effect size, we fit the least squared estimator on the selected model for the point estimate and use the normal quantiles to construct intervals, thereby ``naively'' ignoring the selection occurred in the two screening stages.

\item {\em Bonf+Lasso+Lee et al. (B.L.L.)}\quad 
The eGene screening is conducted in a two-stage procedure similar to our proposed pipeline, with the exception that there is no randomization in the t-statistics or in the second stage selection of eVariants. The confidence intervals for effect sizes are constructed using the adjustments for the LASSO as detailed in \cite{exact_lasso}. We note that this approach (based on exact evaluation of the selection conditional probability) only accounts for the eVariant selection the second stage, but not for eGene selection in the first stage. Further, \cite{exact_lasso} does not give a selection-adjusted point estimate, and so we report the unadjusted least squared estimator on the selected variants as a point-estimate. 
\end{description}

Note that the proposed pipeline differs from {\em G.V.} in both stages of screening: the selection of eGenes and the screening of potential eVariants. These changes (swapping the construction of Simes global p-values with Bonferroni p-values in eGene screening and swapping the adaptive forward-backward selection with a LASSO screening of variants associated with eGenes) allow an easier explanation of the mining scheme and provides a more natural way to introduce randomization. Comparing our proposed procedure with these two alternatives, therefore, enables us to study how its performance relates to that of (1) a state-of-the-art  method for the identification of eVariants that does not take into account at all the effect of selection at the inferential stage, and to that of (2) an approach to correct for selection that relies on out-of-the box tools, without fully capturing the hierarchical identification of eVariants and not capitalizing on the possible power increases due to randomization.

\subsection{Simulation study}

We start with a simulation study to explore the performance of the different procedures:
using the genotypes from the GTEx liver sample, we generate artificial gene expression values and investigate how the three approaches reconstruct the effect sizes of eVariants.

\subsubsection{Data generation}
In choosing a strategy to simulate gene expression values, we followed the following principles: (a)  the selection strategy in {\em G.V.} should lead to a number of eGene and eVariant discoveries similar to those detected in the real data in \cite{GTEx17}; (b) there should be some genes that are not true eGenes, so that it is sensible to consider ``false discoveries"; (c) there should be eGenes regulated by multiple eVariants; (d) the model should be as simple as possible, to make the interpretation of the results straightforward. After experimenting with a few models, that satisfied criteria (b)-(d), we found the following to be the one that gave results closer to those in \cite{GTEx17} and adopted it.

For each gene $g$ for which gene expression is available in the real data, and for which $V_g$ {\em cis}-variants have been genotyped, we generate a vector $Y^{(g)}\in \real^n$ of  synthetic gene expression as follows
\begin{enumerate}
\item We randomly set the number of causal variants $|\mathcal{S}^{(g)}|$  from $\{0,1,2, \cdots, 9\}$ according to the distribution in \ref{negenes}, such that approximately  a third of the  genes  contain at least one true signal. 

\begin{figure}[H]
\begin{center}
\centerline{\includegraphics[width=10cm]{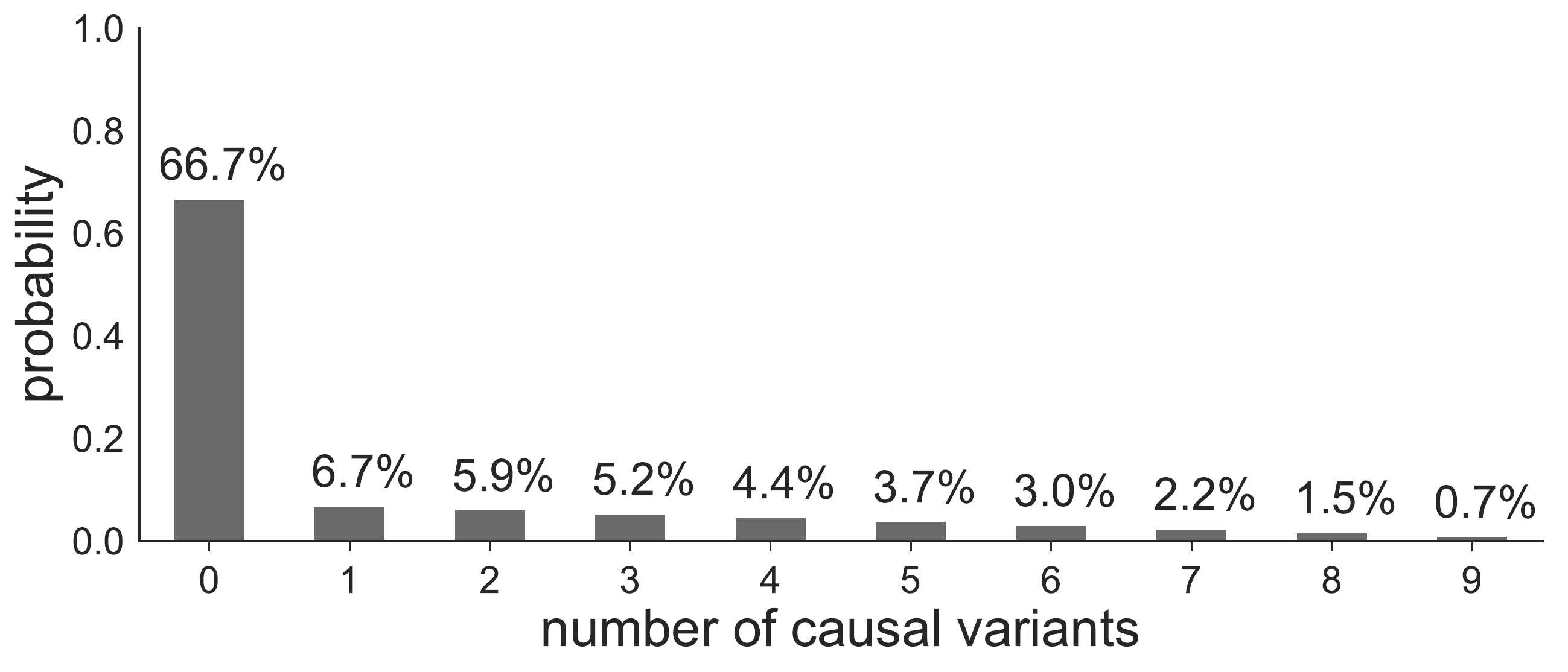}}
\caption{\small{Probability of the number of causal variants for each gene in the simulation study}}
\label{negenes}
\end{center}
\vspace{-1.cm}
\end{figure}

\item A set  $\mathcal{S}^{(g)}$ of causal variants of the selected size is  drown randomly from the variants in the {\em cis} region.  To assure, however, that the desired number of ``independent" signals is present, we want to make sure that $\mathcal{S}^{(g)}$ does not contain variants that are closely correlated and whose contribution to the gene expression value would be indistinguishable. 
To achieve this goal, we subject the genotyped variants to 
  hierarchical clustering with minimax linkage (an unsupervised pruning technique detailed in Appendix \ref{prune}): we randomly select $|\mathcal{S}^{(g)}|$ clusters,  and randomly assign an element in the cluster to be a causal variant. 
  
We also rely on this same clustering to identify a ``pruned'' set of variants that will  constitute the input of the penalized regression, which needs to work on non-collinear variables. From each of these clusters, we choose a representative, correlated by at least $\rho_0=0.5$ with all members of that cluster. We remark that the cluster representative and the possible causal variant residing in the cluster do not necessarily coincide.  
  
  \item 
  The  expression values $Y^{(g)}$ are generated according to the following model
\[Y^{(g)} = \sum_{k\in \mathcal{S}^{(g)}} {X}^{(g)}_{k} \beta_k + \epsilon^{(g)},\; \epsilon^{(g)} \sim \mathcal{N}(0, I),\]
where the noise vector $\epsilon^{(g)} \in \real ^n$ is independently sampled for each gene, each ${X}^{(g)}_{k}$ is standardized with mean zero and unit variance, and the true effect sizes are  
$\beta_k =3$. 

  \end{enumerate}

\subsubsection{Evaluation metrics}
Because of the computational costs associated to the analysis methods, we do not repeat the data generating process multiple times for a given gene, but rather we interpret our results by aggregating across genes with the same number of causal variants.

The focus of this work is to provide methods that allow correct inference for 
the adaptive target defined in \eqref{projected:target}, dependent on the selection of variants associated with each eGene.
To evaluate the performance of the three approaches with this respects, we rely on three quantities: the coverage of the confidence intervals, their lengths, and the average empirical risk of the point estimate computed with respect to a quadratic loss function. Note that the parameters to be estimated in the loss metric for risk evaluation are the adaptive targets. 

At the same time, given that the three procedures differ in their selection steps, to  interpret appropriately the results, it is useful to also compare them at the level of selection. In keeping with the hierarchical structure of the selection, there are two levels at which it makes sense to talk about FDR and power: the eGene level and the eVariant level, for selected eGenes. FDR and power are easily defined for eGenes. To  explore the performance at the level of eVariants, we are going to focus on the eVariants for selected eGenes (therefore, eVariants for erroneously missed eGenes are not going to be considered in our power evaluation). Our proposed pipeline and  {\em B.L.L.} receive as input only cluster representative SNPs and therefore can only identify these as eVariants: we consider their discoveries correct if they correspond to a cluster that contains a true causal variants;  likewise we consider a causal variant discovered if the cluster that contains it has been selected. We further note that  the selection adjusted confidence intervals can be used to refine the selection of eVariants: while regularized regression might estimate a coefficient as different from zero, if the corresponding adjusted confidence interval covers zero, it make sense to discard the eVariant from the discovery set. Therefore, it seems appropriate to evaluate FDR and power on the basis of this final post-inference set.

\subsubsection{Results}
Figures \ref{sim:inf:res} and \ref{sim:inf:res:fdr} summarize the results of the simulations, the first focusing on the inference on effect sizes, which is the object of the present paper, and the second anchoring these results in the context of FDR and Power.

Figure \ref{sim:inf:res} clearly illustrates that our pipeline succeeds in producing confidence intervals with the correct 
 empirical coverages, and that the lengths of these confidence intervals are not unduly large.  The failure of {\em G.V.} to produce confidence intervals with the right coverage is to be expected: the interest of our results is in showing the extent of the departure. While adjusting for eVariant selection as in  {\em B.L.L.} improves coverage, the approach still falls short of the target for variances in erroneously selected eGenes and for eGenes with a small number of causal variants. Moreover, the lengths of these adjusted confidence intervals are substantially longer than the ones we propose, which are just $1.5$ times longer than the naive---this is the advantage of randomization.

\begin{figure}[H]
\begin{center}
\centerline{\includegraphics[width=0.96\linewidth]{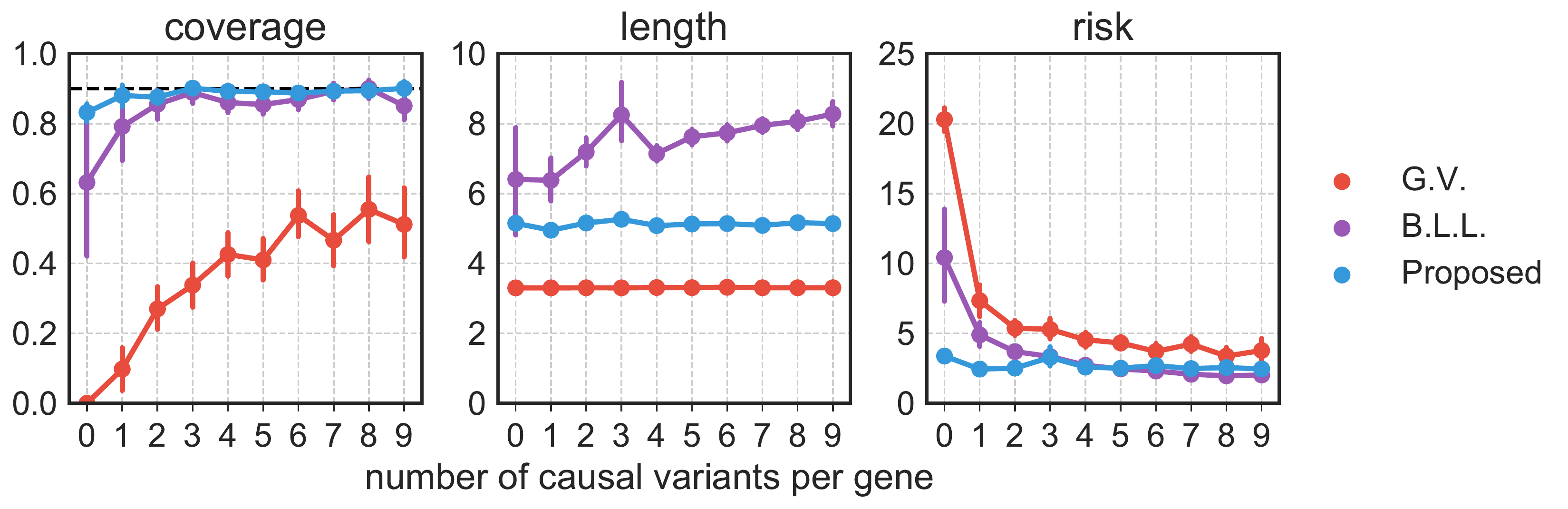}}
\caption{\small{Comparison of coverages (a)  and   lengths (b)  of confidence intervals and risk (c) of the MLE for the adaptive targets resulting from  three different methods: {\em GTEx+vanilla} (G.V., red), {\em Bonf+Lasso+Lee et al.} (B.L.L., purple) and our proposed pipeline (blue).  The values are averaged across all the selected genes with the same true number of causal variants, reported on the $x$-axis. The target   coverage is 0.9, corresponding to the  dotted horizontal line in the first panel. The dots represent the averages across all eGenes within a signal regime, and the vertical bars represent one standard deviation in both directions.}} 
\label{sim:inf:res}
\end{center}
\vspace{-0.5cm}
\end{figure}

\begin{figure}[H]
\begin{center}
\includegraphics[width=\linewidth]{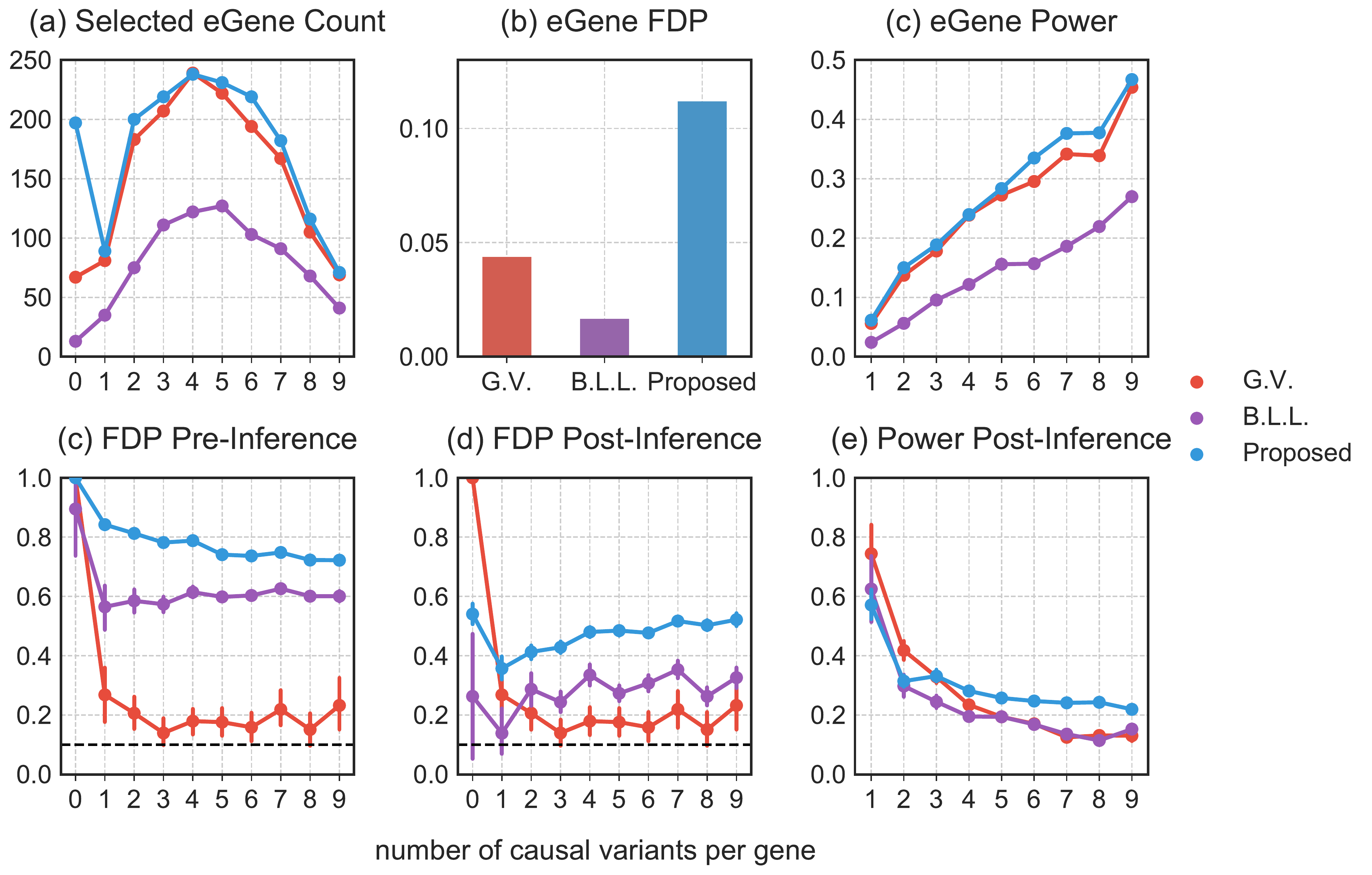}
\caption{\small{Summary of FDR (at level 0.1) and  power  for eGene and eVariants. The top row refers to eGenes and reports (a) the total number of discovered eGenes, (b) the FDR and (c) power. The bottom row, (d), (e), (f) refers to eVariants and focuses only on the eVariants relative to the discovered eGenes for each method. Panel (d)  calculates FDR for each of the gene categories on the basis of the results of the selection procedures; panel (e) shows how these results change once parameters whose adjusted confidence intervals cover zero are eliminated; and (f) illustrate the power corresponding to the selection in (e). Panels (a), (c), (d), (e), and (f) are tabulated for each of the ten categories of genes, defined by their true number of causal variants.}}
\label{sim:inf:res:fdr}
\end{center}
\vspace{-0.5cm}
\end{figure}

Figure \ref{sim:inf:res:fdr} puts these result in context of the selection outputs of the three different methods.
Looking at eGene selections, we can see that  {\em B.L.L.} is unduly conservative at the eGene level as its FDR  is much lower than the that of our procedure. It is reassuring to note that {\it Proposed} has slightly higher power than the benchmark {\it G.V.} at the eGene selection level, while controlling for FDR at the level of 0.1.

{
At the eVariant selection stage, we report the FDR of the screening procedure standard (used in {\it B.L.L.}) and randomized regularized regression (used in {\it Proposed}) and the forward-backward selection (used in {\it G.V.}). The average number of eVariants per eGenes reported by the methods are shown in Table \ref{eVariants:report:average}. Because the number of declared eVariants (post-inference) are determined by whether their intervals cover 0 or not among the screened evariants (pre-inference) for both {\it B.L.L.} and {\it Proposed}, we see the decrease in the average number of reported variants with the exception of forward-backward selection. In {\em G.V.}, the eVariants (reported variants) are selected by an adaptive forward-backward selection and this report is not impacted by inference; thereby, the FDP curve is the same both pre and post inference for {\em G.V.}.} 

Both the standard and randomized regularized regression do not appear to control the FDR  (Figure \ref{sim:inf:res:fdr} panel (d)):  correction based on the adjusted confidence intervals improves (Figure \ref{sim:inf:res:fdr} panel (e)) the performance, however still is quite inferior to the one of {\em G.V.}. This is not surprising as the adjusted approach aims at providing valid p-values, but does not apply any correction for multiplicity encountered at the eVariant level. It is rather reassuring that the selections of {\em G.V.}, which practically coincide with those reported in \cite{GTEx17},  have low FDR: this documents the efforts of a large community of scientists whose focus was precisely the selection of eVariants. 
The better selection performance of {\em G.V.} underscores the opportunity of developing alternative  adjusted approaches that condition on selection strategies as the one adopted in GTEx.

\begin{table}
\caption{The number of selected eVariants over all the selected eGenes \label{eVariants:report:average}}
\centering
\begin{tabular}{|l|l|l|l|}
\hline
  & G.V. & B.L.L. & Proposed \\ \hline
Average number of eVariants pre-inference         & 1.080  & 5.30  & 7.87     \\ \hline
Average number of eVariants post-inference        & 1.080  & 1.46 & 2.08     \\ \hline
\end{tabular}    
\end{table}

\subsection{Effect sizes in GTEx liver data }
\label{real:study}

We now turn to the analysis of the real expression data available for liver in \cite{GTEx17}. We start by noting that the limited sample size (97 observations) makes it impossible to rely on data splitting: a selection based on a 50\% of the data results in the detection of only $56$ eGenes, a mere $4\%$ of the eGenes reported in \cite{GTEx17}. We analyze the data using the three procedures we compared via simulation, reporting  the  eVariants whose adjusted-interval estimates does not cover $0$. 

The Venn-diagram \ref{venn:egenes} compares the  eGenes selected by the three procedures: our pipeline identified $2216$ eGenes, with $1683$ in common with the {\em G.V.}, which detects $1831$ eGenes in all. The {\em B.L.L.} selects $1395$ eGenes,  $1341$ of which are also identified by our proposal.
\begin{figure}
\begin{center}
\centerline{\includegraphics[width=0.5\linewidth]{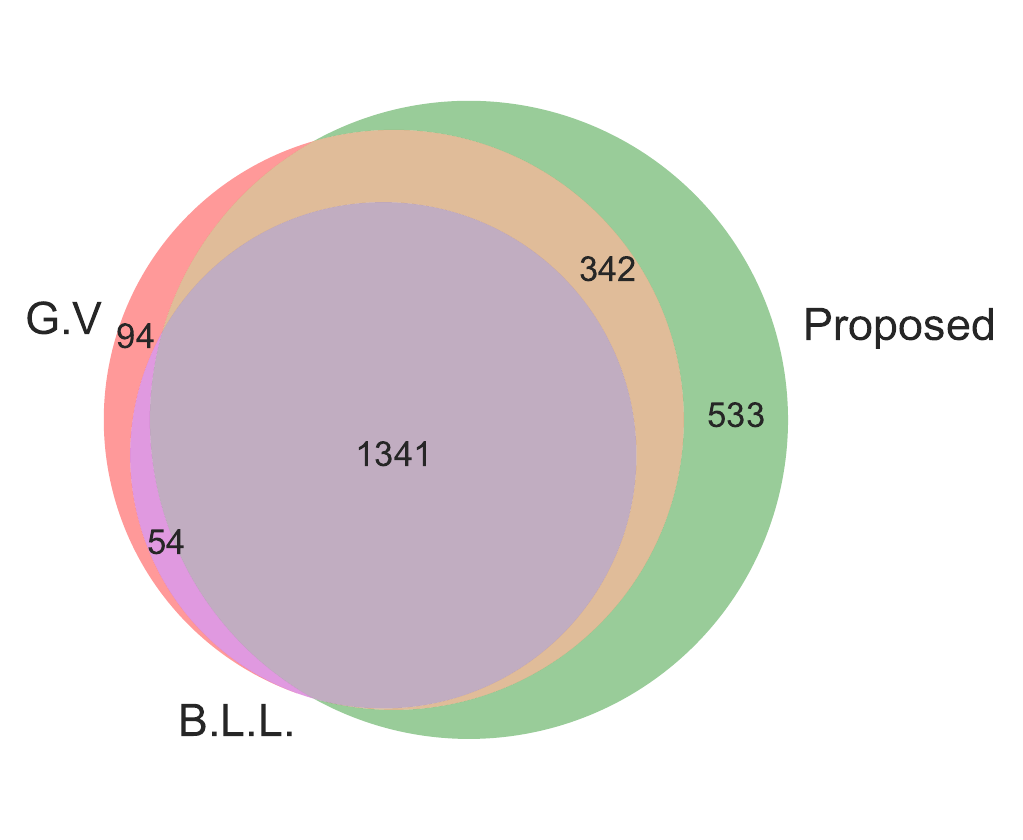}}
\caption{\small{The Venn diagram of the number of eGenes identified by each methods. Green indicates the selected eGenes unique to Proposed; pink indicates the selected eGenes unique to {\it GTEx+Vanilla} (G.V.); yellow indicates the eGenes selected by only Proposed and G.V.; magenta indicates the eGenes selected by only G.V and  {\it Bonf+Lasso+Lee et al.} (B.L.L.); dark magenta indicates the eGenes selected by all three methods. Note that there are no eGenes selected by only B.L.L..
}}

\label{venn:egenes}
\end{center}
\end{figure}

Figure \ref{comparison:evariants} illustrates the eVariants results: the number of detected eVariants and distribution of the  lengths of their confidence intervals. The pipeline we have constructed detects an average of  $4$-$5$ eVariants  per gene, while {\em B.L.L.} reports $2$-$3$ eVariants on an average and {\em G.V.} $1$ eVariant on average per eGene. Our simulations indicate that both our proposed procedure and {\em B.L.L.}  tend to have higher FDR than {\em G.V.}, so that we cannot assume all the additional discoveries to be valid ones. Nevertheless, even assuming a false discovery rate as high as 50\%, the  number of extra discoveries is such that we can expect that {\em B.L.L.} and our proposed method do allow the identification of a considerable number of additional true discoveries, or, in other words, that  {\em G.V} underestimates the number of eVariants. 
 \begin{figure}
\begin{center}
\centerline{\includegraphics[width=1.0\linewidth]{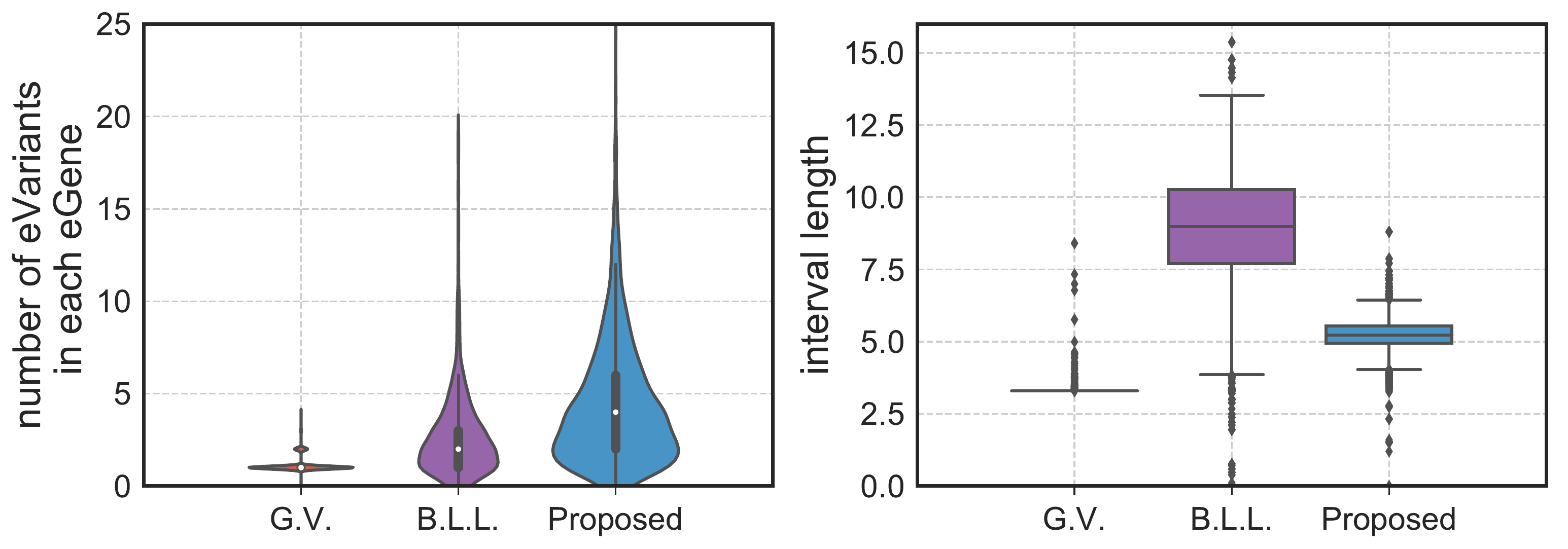}}
\caption{\small{Left panel gives the distribution of the number of eVariants per eGene as reported by the three methods of inference. Right panel plots average lengths of intervals using the interval estimates based on the same.
}
}
\label{comparison:evariants}
\end{center}
\vspace{-1.cm}
\end{figure} 

Table 2 compares the eVariant findings between {\em G.V.} and the proposed pipeline in the current work. The table in the left panel gives the number of common and exclusive eVariants reported by each procedure across the common eGenes; a total of $1683$ eGenes are reported by both. The table in the right panel compares the eVariant findings across all eGenes reported by any of these proposals. These reported findings show that even, after a deduction of $50\%$ of the discoveries from our proposal, we report potentially new discoveries that are not discovered by the benchmark analysis in {\em G.V.}.

\begin{table}
\label{eVariants:report:final}
    \caption{Comparison of total number of reported eVariants  between two methods}
    \begin{minipage}{.5\linewidth}
    	\centering
\begin{tabular}{cM{1cm}M{1cm}M{1cm}M{1cm}M{1cm}N}
			  \multicolumn{5}{c}{{\bf Across common eGenes}}            &                     \\[5ex]
                          &                            & \multicolumn{2}{c}{Proposed }            &             &        \\ [1ex]
                          & \multicolumn{1}{c|}{}      & \multicolumn{1}{c|}{$+$}    & \multicolumn{1}{c|}{$-$}    & Total  &\\[3ex] \cline{2-5} 
\multirow{2}{*}{\rotatebox{90}{GTEx benchmark}} & \multicolumn{1}{c|}{$+$}     & \multicolumn{1}{c|}{1224} & \multicolumn{1}{c|}{439} & 1663 & \\[3ex] \cline{2-5} 
                          & \multicolumn{1}{c|}{$-$}     & \multicolumn{1}{c|}{4584}  & \multicolumn{1}{c|}{}     &    &   \\[3ex] \cline{2-5} 
                          & \multicolumn{1}{c|}{Total} & \multicolumn{1}{c|}{5808} & \multicolumn{1}{c|}{}     &   & \\[3ex]
\end{tabular}
    \end{minipage}%
    \begin{minipage}{.5\linewidth}
    	\centering
\begin{tabular}{cM{1cm}M{1cm}M{1cm}M{1cm}M{1cm}}
			  \multicolumn{5}{c}{{\bf Across all eGenes}}            &                     \\[5ex]
                          &                            & \multicolumn{2}{c}{Proposed}            &               &      \\ [1ex]
                          & \multicolumn{1}{c|}{}      & \multicolumn{1}{c|}{$+$}    & \multicolumn{1}{c|}{$-$}    & Total  &\\[3ex] \cline{2-5} 
\multirow{2}{*}{\rotatebox{90}{GTEx benchmark}} & \multicolumn{1}{c|}{$+$}     & \multicolumn{1}{c|}{1224} & \multicolumn{1}{c|}{837} & 2061 & \\[3ex] \cline{2-5} 
                          & \multicolumn{1}{c|}{$-$}     & \multicolumn{1}{c|}{8175}  & \multicolumn{1}{c|}{}     &    &   \\[3ex] \cline{2-5} 
                          & \multicolumn{1}{c|}{Total} & \multicolumn{1}{c|}{9399} & \multicolumn{1}{c|}{}     &   & \\[3ex]
\end{tabular}
\end{minipage} 
\end{table}

\section{Discussion}
In many genomic studies, the first step of data analysis identifies interesting associations between variables. Inference of the parameters governing these associations is attempted only in a second stage.  Naive estimates, that rely on standard methods using the same data that suggested the importance of these associations, do not enjoy satisfactory statistical properties. A common strategy to avoid misleading conclusions is to resort to data splitting, using one portion of the data to identify parameters of interest and another to estimate them reliably.
When the studies, however, rely on scarce biological samples, data splitting (or deferring to a new sample for inference) is not a viable option. The studies of genetic regulation of gene expression in hard-to-access human tissues are a perfect exemplification of these challenges. On the one hand, to identify the parameters of interest one sifts through the possible associations between tens of thousands of genes and hundreds of thousands of DNA polymorphisms. On the other hand, the number of available samples is in the hundreds at the best.

In the present work we have explored to which extent the conceptual framework of conditional inference after selection can be brought to bear to provide researchers with reliable tools to estimate the effect size associated to DNA variants that are discovered to be associated with variation in gene expression. Following what is the standard practice in the eQTL community, we have described a two stage process for the selection of relevant {\em cis} variants: first genes under {\em cis} regulations are identified, and then the subset of important variants in their vicinity. Recognizing both the appeal that regularized regression has in the statistical community, and the ease with which we can condition on selection events based on the results of lasso, our selection of important variants for a gene of interest is carried out with a regularized regression. Acknowledging that a complete conditioning on the selection event leaves little information in the data for inference, we have exploited randomization techniques.

Deriving appropriate conditional inference in this setting presented a number of challenges. In addition to having to deal with a two-stages selection process, calculation of interval and point estimates is more involved post randomization unlike the easy computation of intervals based on a truncated Gaussian law in \cite{exact_lasso}.
To bypass the fact that the exact selection-adjusted law lacks a closed form expression, we introduce an approximation strategy that is likely to be useful in other contexts. Further, we highlight in this work a clear pay-off in terms of inferential power associated with randomized mining strategies in comparison to \cite{exact_lasso} despite incurring some cost involved in making inference in the randomized framework tractable.

A simulation study carried out starting from real data underscored the dangers of ignoring selection at the inferential stage: naive confidence intervals constructed for the effect sizes of the variants identified with a published strategy missed the target coverage by a wide margin. The pipeline we developed leads to confidence intervals with correct coverage and with lengths that are appreciably shorter than those obtained with out-of-the-box tools for conditional inference. While this results are encouraging, we also noted that the selection procedure  in our pipeline has appreciably worse performance that state-of-the-art variant selection strategies in terms of FDR. Given that we resorted to randomization to reserve information for inference, we expect a loss of precision at the selection stage. However, the size of the difference we observe is likely not to be ascribable only to randomization, but to the property of the selection procedure itself. Specifically, we observe that the regularized regression we adopted does not enjoy FDR control. In fact, we realize that the techniques in this work are amenable to a more general framework of convex learning programs, with the proposed mining pipeline as a specific example in this broader class of selection schemes. It is certainly an interesting direction for future work to study if computation for conditional inference can be carried out for other selection strategies with better FDR control, as SLOPE, the knockoffs etc. 

While we have provided effect size estimates for single tissue eQTL experiments, borrowing strengths from samples across multiple tissues (see \cite{li2013empirical}) is likely to lead to enhanced power in both discoveries and subsequent inference on their effect sizes. The adjusted likelihood provided by our work post an adequate selection strategy for discoveries based on samples, combined across different tissues can be potentially incorporated into the inferential framework. Exploring the proposed pipeline for inference in other genomic applications including the GWA (genome-wide association) studies to effectively calibrate effect sizes post discoveries by optimally using the information in the available data samples remains a challenge ahead. We conclude by remarking that we hope to take on these challenges in other biological contexts with this first attempt of tractably addressing the issue of selection bias in effect size estimation in eQTL studies.

\section{Acknowledgements}
The authors acknowledge that data for the gene expression analysis, described in this manuscript was acquired using dbGaP accession number \textbf{phs000424.v6.p1}. S.P. would like to thank Jonathan Taylor for several helpful discussions regarding this project. J.Z. acknowledges support from Stanford Graduate Fellowship. C.S. acknowledges support from NSF DMS 1712800 and NIH R01MH101782.  

\bibliographystyle{chicago}
\bibliography{references.bib}

\appendix
\section{An unsupervised pruning of local variants}
\label{prune}
The distance matrix $\{d_{i,j},\; i, j \in\; 1,2,\cdots, V_g\}$ based on which we implement an unsupervised pruning of variants, measured within a cis-window around a gene $g$ in the genome, is defined as
$d_{i,j} = d(X_i, X_j)= 1- \rho(X_i, X_j)$; $\rho$ denoting the Pearson correlation between variants $X_i$ and $X_j$. To perform a clustering of variants and obtain a prototype for each cluster, we implement a hierarchical clustering with a minimax linkage, explored in \cite{bien2011hierarchical}. This is based on a distance between clusters $\mathcal{C}_1$ and $\mathcal{C}_2$ defined as 
\[d(\mathcal{C}_1, \mathcal{C}_2) = d_0 (\mathcal{C}_1, \mathcal{C}_2),\; \text{where } d_0(\mathcal{C}) = \min_{X_i\in \mathcal{C}}\max_{X_j\in \mathcal{C}}d(X_i, X_j).\] 
The prototype for each cluster $\mathcal{C}$ is determined as \[\arg\min_{X_i\in \mathcal{C}}\max_{X_j\in \mathcal{C}}d(X_i, X_j).\]
To determine the number of clusters for a set of variants, we cut the minimax tree at a height $1-\rho_0$ that  yields a dataset of prototypical variants in which every variant has correlation of at least $\rho_0$ with one of the prototype SNPs. For our implementations, we use a $\rho_0= 0.5$ so that the prototypes $E^{(g)}$ selected by a LASSO analysis can be interpreted as identifying all the local variants that are correlated by $0.5$ with any of these representatives to be the set of promising functional variants. Prior to the secondary analysis in both the simulation study in \ref{simulation} and the GTEx data analysis in \ref{real:study}, we perform a pruning of cis-variants by the above prescribed scheme. In the simulations, pruning is seen to reduce the size of local variants for $1770$ selected eGenes with an average number of SNPs of $4497$ to $229$ SNPs on an average; in the eQTL experiment, the averaged pruned size of variants is $262$ for $2261$ eGenes with $5101$ variants on an average.

\section{Decoupled selection event: with extra conditioning}
\label{add:cond}
Theorem \ref{separable:selection} facilitates a decoupling of the truncated law across eGenes by considering a selection event 
\begin{equation*}
\begin{aligned}
&\Big\{g \in \hat{\mathcal{G}}, \; K = K_0,\;   j_{(1)}^{(g)}=  j_0, \;T_{j_{(2)}}^{(g)}= T_0,\;\text{sign}(T^{(g)}_{j_0}) = s_{j_0},\\
&\;\;\;\;\;\hat{E}(y^{(g)},\zeta^{(g)})= E^{(g)}, \text{sign}(\hat\beta_{E^{(g)}}) = s_{E^{(g)}} \Big\}
\end{aligned}
\end{equation*}
with additional information. The decoupled law is computationally less burdensome to handle. We now give a proof of this Lemma.
\begin{proof} \emph{Separability result in \ref{separable:selection}}
This selection event can be written as
\begin{align}
\label{egene:decouple}
&\Scale[0.95]{\;\;\Big\{g \in \hat{\mathcal{G}}, \; K = K_0,\;   j_{(1)}^{(g)}=  j_0, \; T_{j_{(2)}}^{(g)}= T_0,\;\text{sign}(T^{(g)}_{j_0}) = s_{j_0},}\nonumber\\
&\Scale[0.95]{\;\;\;\;\;\hat{E}(y^{(g)},\zeta^{(g)})= E^{(g)}, \text{sign}(\hat\beta_{E^{(g)}}) = s_{E^{(g)}} \Big\}}\nonumber\\
&\Scale[0.95]{= \Big\{\tilde{p}^{(g)} \leq \frac{K_0}{G} q ,\;   j_{(1)}^{(g)}=  j_0, \; T_{j_{(2)}}^{(g)}= T_0,\; \text{sign}(T^{(g)}_{j_0}) = s_{j_0},}\nonumber \\
&\Scale[0.95]{\;\;\;\;\;\;\;\hat{E}(y^{(g)},\zeta^{(g)})= E^{(g)} , \text{sign}(\hat\beta_{E^{(g)}}) = s_{E^{(g)}},}\nonumber\\
&\Scale[0.95]{\;\;\;\;\; \;\;\;\tilde{p}^{(g')}_{(k)} \geq \frac{(K_0 + k)}{G} q  \text{ for } g' \notin \mathcal{G}\Big\};\;\;\text{ where } \{ \tilde{p}^{(g')}_{(k)}\}  \text{ denote ordered p-values amongst non eGenes $g' \notin \mathcal{G}$.}}\nonumber\\
&=\Scale[0.95]{\Big\{s_{j_0} T_{j_0}^{(g)}\geq |T_{0}| , \;s_{j_0} T_{j_0}^{(g)}\geq \sqrt{1+\gamma^2}\cdot\Phi^{-1}\left(1- \frac{K_0}{2V_g G}q\right) }\nonumber\\
&\Scale[0.95]{\;\;\;\;\;\;\;\;\;\hat{E}(y^{(g)},\zeta^{(g)})= E^{(g)}, \text{sign}(\hat\beta_{E^{(g)}}) = s_{E^{(g)}} \Big\} \bigcap \left\{\tilde{p}^{(g')}_{(k)} \geq \frac{(K_0 + k)}{G} q  \text{ for } g' \notin \mathcal{G}\right\}}\nonumber\\
&=\Scale[0.95]{\Bigg\{s_{j_0} \left({X_{j_0}^{(g)}}^T y^{(g)}  + \omega_{j_0}^{(g)}\right) \geq \max\left(\sqrt{1+\gamma^2}\cdot\Phi^{-1}\left(1- \frac{K_0}{2V_g G}q\right), |T_{0}^{(g)}|\right), }\nonumber\\
&\Scale[0.95]{\;\;\;\;\;\;\;\;\hat{E}(y^{(g)},\zeta^{(g)})= E^{(g)}, \text{sign}(\hat\beta_{E^{(g)}}) = s_{E^{(g)}}\Bigg\}\bigcap \left\{\tilde{p}^{(g')}_{(k)} \geq \frac{(K_0 + k)}{G} q  \text{ for } g' \notin \mathcal{G}\right\}}.
\end{align}
The second step follows by noting that if $K_0$ is the number of rejections in the BH, then the rejected p-values satisfy
\[\tilde{p}^{(g)} \leq \frac{K_0}{G} q;\]
and the genes $g'$ (ordered in terms of their Bonferroni-adjusted p-values) that are discarded at this stage have corresponding  p-values that satisfy
\[\tilde{p}^{(g')}_{(k)} \geq \frac{(K_0 + k)}{G} q. \]
The last equality describes the selection of eGenes in terms of t-statistics.\\

\noindent Define the event $\{\hat{\mathcal{H}}(y^{(g)}, \omega^{(g)}, \zeta^{(g)})= \mathcal{H}\}$ as 
\begin{equation*}
\begin{aligned}
&\Scale[0.95]{\Bigg\{s_{j_0} \left({X_{j_0}^{(g)}}^T y^{(g)}  + \omega_{j_0}\right) \geq \max\left(\sqrt{1+\gamma^2}\cdot\Phi^{-1}\left(1- \frac{K_0}{2V_g G}q\right), |T_{0}^{(g)}|\right),}\nonumber\\
&\Scale[0.95]{ \;\;\;\;\;\;\;\;\;\hat{E}(y^{(g)},\zeta^{(g)})= E^{(g)},\text{sign}(\hat\beta_{E^{(g)}}) = s_{E^{(g)}}\Bigg\},}\nonumber
\end{aligned}
\end{equation*}
clearly  a function of $(y^{(g)},\omega_{j_0}^{(g)},\zeta^{(g)})$, exclusive to eGene $g$.
The second component $\{\hat{\mathcal{J}}(y^{(g')}, \omega^{(g')}), g'\notin \mathcal{G}) =\mathcal{J}\}$ is defined as
\[\Scale[0.97]{\left\{\tilde{p}^{(g')}_{(k)} \geq \frac{(K_0 + k)}{G} q  \text{ for } g' \notin \mathcal{G}\right\},}\]
a function of $(y^{(g')}, \omega^{(g')})$ associated with genes not selected by BH. This completes the proof that the selection region under consideration takes a separable form.
\end{proof}

Conditioning on nuisance statistics and marginalizing over randomizations gives a selection-modified family of exponential laws in Lemma \ref{marginal:law:exp} . We include a proof for the same.
\begin{proof}\emph{Exponential law in \ref{marginal:law:exp}.}
With a slight abuse of notation (noting that we use the same $\mathcal{H}$ to denote the selection region), we describe the selection event in terms of data and randomization as
\[\left\{(\hat b_{j; E^{(g)}}, \mathcal{U}_j^{(g)}, \omega^{(g)}, \zeta^{(g)})\in \mathcal{H}\right\}.\]
The joint law of data and randomizations introduced in selections is proportional to
\begin{equation*}
\begin{aligned}
&{\exp\left(-\dfrac{(\hat{b}_{j;E^{(g)}} -b_{j;E^{(g)}})^2}{2\sigma_{j;E^{(g)}}^2}\right)\cdot\exp\left(-{\|\Sigma_U^{-1/2}(\mathcal{U}_j^{(g)}-n_{j})\|_2^2}\right)}\\
&{\cdot \exp\left(-\dfrac{{\|\omega^{(g)}}\|_2^2}{2\gamma^2}\right)\cdot \exp\left(-\dfrac{\|\zeta^{(g)}\|_2^2}{2\tau^2}\right)\cdot 1_{\left\{(\hat b_{j; E^{(g)}}, \mathcal{U}_j^{(g)}, \omega^{(g)}, \zeta^{(g)})\in \mathcal{H}\right\}}}
\end{aligned}
\end{equation*}
Conditioning on null-statistics $\mathcal{U}^{(g)}_j= u^{(g)}_j$, the marginal law of $\hat{b}_{j;E^{(g)}}$ based on the above joint distribution is proportional to
\begin{equation*}
\begin{aligned}
&\exp\left(-(\hat{b}_{j;E^{(g)}} -b_{j;E^{(g)}})^2/{2\sigma_{j;E^{(g)}}^2}\right) \int \exp\left(-{{\|\omega^{(g)}}\|_2^2}/{2\gamma^2}\right)\cdot \exp\left(-{\|\zeta^{(g)}\|_2^2}/{2\tau^2}\right)\cdot\\
&\;\;\;\;\;\;\;\;\;\;\;\;\;\;\;\;\;\;\;\;\;\;\;\;\;\;\;\;\;\;\;\;\;\;\;\;\;\;\;\;\;\;\;\;\;\;\;\;\;\;\;\;\;\;\;\;\;\;\;\;\;\;\;\; 1_{\{(\hat{b}_{j; E^{(g)}},\omega^{(g)}, \zeta^{(g)}) \in \mathcal{H}(u^{(g)}_j)\}} d\omega d \zeta.\\
&=\exp\left(-{(\hat{b}_{j;E^{(g)}} -b_{j;E^{(g)}})^2}/{2\sigma_{j;E^{(g)}}^2}\right) \cdot \mathbb{P}\Big[(\hat{b}_{j; E^{(g)}},\Omega^{(g)}, Z^{(g)}) \in \mathcal{H}(u^{(g)}_j) \;\lvert \hat{b}_{j; E^{(g)}}=\hat{b}_{j; E^{(g)}},\\
&\;\;\;\;\;\;\;\;\;\;\;\;\;\;\;\;\;\;\;\;\;\;\;\;\;\;\;\;\;\;\;\;\;\;\;\;\;\;\;\;\;\;\;\;\;\;\;\;\;\;\;\;\;\;\;\;\;\;\;\;\;\;\;\;\;\;\;\;\;\;\;\; \mathcal{U}^{(g)}_j= u^{(g)}_j\Big].
\end{aligned}
\end{equation*}
This gives the adjusted marginal law of $\hat{b}_{j; E^{(g)}}$, as stated above.
\end{proof}

\section{K.K.T. mapping for selection optimizations}
\label{KKT}
We make explicit the selection maps in \eqref{KKT:egene:map} and \eqref{lasso:evariant:map} that characterize the identification of a set of eGenes, denoted by $\mathcal{G}$ and the selection of active variants corresponding to each such eGene, represented by the set $E^{(g)}$. 
These maps are obtained from the K.K.T. conditions that characterize the solutions of the underlying optimizations/ selections. The K.K.T. maps are, in both the selection of eGenes and active variants with the additionally conditioned information described in \ref{add:cond}, affine maps in both the target statistic $\hat{b}_{j;E^{(g)}}$ and randomizations, $\omega^{(g)}$ and $\zeta^{(g)}$.
\begin{itemize}[leftmargin=*]
\setlength\itemsep{1.2em}
\item Selection map for eGenes: 
The selection of eGene $g$ (with additional information) in the event $\{\mathcal{H}(y^{(g)}, \omega^{(g)}, \zeta^{(g)})= \mathcal{H}\}$ is given by
\[\Bigg\{s_{j_0} \left({X_{j_0}^{(g)}}^T y^{(g)} + \omega_{j_0}\right) \geq \max\left(\sqrt{1+\gamma^2}\cdot\Phi^{-1}\left(1- \frac{K_0}{2V_g G}q\right), |T_{0}^{(g)}|\right)\Bigg\}.\]
This is equivalent to a map
\[\eta^{(g)} = s_{j_0} \left({X_{j_0}^{(g)}}^T y^{(g)}  + \omega_{j_0}^{(g)}\right)\]
where $\eta^{(g)}$ satsifies constraints 
\begin{equation}
\label{sel:constraint:1}
\Scale[0.95]{\mathcal{C}_1 = \left\{\eta: \eta \geq \max\left( \sqrt{1+\gamma^2}\cdot\Phi^{-1}\left(1- \frac{K_0}{2V_g G}q\right), |T_{0}^{(g)}|\right) \right\}}.
\end{equation}
Decomposing the data-vector ${X_{j_0}^{(g)}}^T y^{(g)}$ into the statistic of interest $\hat{b}_{j; E^{(g)}}$ and null-statistics that are orthogonal to $\hat{b}_{j; E^{(g)}}$, this map can be re-written as follows.
\begin{align}
\omega_{j_0}^{(g)} &= -{{X_{j_0}^{(g)}}^T}\tilde{X}_{E^{(g)}} (\tilde{X}_{E^{(g)}}^T \tilde{X}_{E^{(g)}})^{-1}e_j \hat{b}_{j; E^{(g)}}/\sigma_{j;E^{(g)}}^2 + s_{j_0}\eta^{(g)} \nonumber \\
&\;\;\;\;\;\;- {{X_{j_0}^{(g)}}^T} (y - \tilde{X}_{E^{(g)}} (\tilde{X}_{E^{(g)}}^T \tilde{X}_{E^{(g)}})^{-1}e_j\hat{b}_{j; E^{(g)}}/{\sigma_{j;E^{(g)}}^2})\nonumber\\
&=  P^{(g)}_j  \hat{b}_{j; E^{(g)}}  + s_{j_0}\eta^{(g)} + \mathcal{M}^{(g)}_j, \text{ where} \nonumber
\end{align}
\[\Scale[0.95]{\mathcal{M}^{(g)}_j=- {{X_{j_0}^{(g)}}^T} (y - \tilde{X}_{E^{(g)}} (\tilde{X}_{E^{(g)}}^T \tilde{X}_{E^{(g)}})^{-1}e_j\hat{b}_{j; E^{(g)}}/{\sigma_{j;E^{(g)}}^2}) ;}\]
\[ \Scale[0.95]{P^{(g)}_j  = -{{X_{j_0}^{(g)}}^T} \tilde{X}_{E^{(g)}} (\tilde{X}_{E^{(g)}}^T \tilde{X}_{E^{(g)}})^{-1}e_j }.\]
Conditioning on $\mathcal{M}^{(g)}_j$ and denoting it as $q^{(g)}_j$, the K.K.T. map associated with the selection of eGene $g$ is represented as
\begin{equation}
\label{egene:map}
\begin{aligned}
\omega_{j_0}^{(g)} &= P^{(g)}_j  \hat{b}_{j; E^{(g)}}  + s_{j_0}\eta^{(g)} + q^{(g)}_j.
\end{aligned}
\end{equation}

\item Selection map for active variants (potential eVariants): 
Fixing notations, we let $o_{E^{(g)}}$ and $o_{-E^{(g)}}$ represent the active signs and inactive sub-gradients, obtained upon solving the LASSO objective. We denote the vector $\begin{pmatrix} o_{E^{(g)}}\\  o_{-E^{(g)}}\end{pmatrix}$ as $o^{(g)}$. The K.K.T map associated with the solution of the randomized LASSO that selects a set of variants, $E^{(g)}$ with signs $s_{E^{(g)}}$ is given by
\begin{align}
\label{KKT:eVariant}
\zeta^{(g)} &=  -\bgroup
\def\arraystretch{1.5}\begin{bmatrix} \tilde{X}_{E^{(g)}}^T \tilde{X}_{E^{(g)}} &  0 \\ \tilde{X}_{-E^{(g)}}^T \tilde{X}_{E^{(g)}} & I \end{bmatrix} \egroup\begin{pmatrix}\hat{b}_E \\ \tilde{X}_{-E}^T(y - \tilde{X}_E \hat{b}_E)\end{pmatrix}+ \bgroup
\def\arraystretch{1.5}\begin{bmatrix} \tilde{X}_{E^{(g)}}^T \tilde{X}_{E^{(g)}} +\epsilon I &  0 \\ \tilde{X}_{-E^{(g)}}^T \tilde{X}_{E^{(g)}} & I \end{bmatrix} \egroup \begin{pmatrix} o_{E^{(g)}} \\ o_{-E^{(g)}}\end{pmatrix} + \begin{pmatrix} \lambda s_{E^{(g)}} \\ 0 \end{pmatrix} \nonumber \\
&=-\bgroup
\def\arraystretch{1.5}\begin{bmatrix} \tilde{X}_{E^{(g)}}^T \tilde{X}_{E^{(g)}} &  0 \\ \tilde{X}_{-E^{(g)}}^T \tilde{X}_{E^{(g)}} & I \end{bmatrix} \egroup \begin{bmatrix} (\tilde{X}_{E^{(g)}}^T \tilde{X}_{E^{(g)}})^{-1} e_j \\ 0 \end{bmatrix} \hat{b}_{j; E^{(g)}}/\sigma_{j;E^{(g)}}^2 + \bgroup
\def\arraystretch{1.5}\begin{bmatrix} \tilde{X}_{E^{(g)}}^T \tilde{X}_{E^{(g)}} +\epsilon I &  0 \\ \tilde{X}_{-E^{(g)}}^T \tilde{X}_{E^{(g)}} & I \end{bmatrix} \egroup \begin{pmatrix} o_{E^{(g)}} \\ o_{-E^{(g)}}\end{pmatrix} \nonumber\\
&\;\;\;\;\;\;+ \begin{pmatrix} \lambda s_{E^{(g)}} \\ 0 \end{pmatrix} + \mathcal{N}^{(g)}_j  \nonumber \\
&= A_{j}^{(g)}\hat{b}_{j; E^{(g)}} + B_{j}^{(g)}o^{(g)} + c_{j}^{(g)}
\end{align}
where $o^{(g)}$ satisfies the constraints
\begin{equation}
\label{sel:constraint:2}
\mathcal{C}_2 =\{o: \text{sign}(o_{E^{(g)}}) = s_{E^{(g)}}, \; \|o_{-{E^{(g)}}} \|_{\infty} \leq \lambda\}.
\end{equation}
The matrices/ vectors  $A_{j}^{(g)}, B_{j}^{(g)}, c_{j}^{(g)}$ are equal to
\[A_{j}^{(g)} = -\frac{1}{\sigma_{j;E^{(g)}}^2}\bgroup
\def\arraystretch{1.5}\begin{bmatrix} \tilde{X}_{E^{(g)}}^T \tilde{X}_{E^{(g)}} &  0 \\ \tilde{X}_{-E^{(g)}}^T \tilde{X}_{E^{(g)}} & I \end{bmatrix} \egroup\begin{bmatrix} (\tilde{X}_{E^{(g)}}^T \tilde{X}_{E^{(g)}})^{-1} e_j \\ 0 \end{bmatrix},\;B_{j}^{(g)}=   \bgroup
\def\arraystretch{1.5}\begin{bmatrix} \tilde{X}_{E^{(g)}}^T \tilde{X}_{E^{(g)}} +\epsilon I &  0 \\ \tilde{X}_{-E^{(g)}}^T \tilde{X}_{E^{(g)}} & I \end{bmatrix} \egroup \text{ and }\]
\[ c_{j}^{(g)}=\begin{pmatrix} \lambda s_{E^{(g)}} \\ 0 \end{pmatrix} + \mathcal{N}^{(g)}_j.\]
In the above,
$$\mathcal{N}^{(g)}_j= -\bgroup\def\arraystretch{1.5}\begin{bmatrix} \tilde{X}_{E^{(g)}}^T \tilde{X}_{E^{(g)}} &  0 \\ \tilde{X}_{-E^{(g)}}^T \tilde{X}_{E^{(g)}} & I \end{bmatrix} \egroup\left\{\begin{pmatrix}\hat{b}_E \\ \tilde{X}_{-E}^T(y - \tilde{X}_E \hat{b}_E)\end{pmatrix} -  \begin{bmatrix} (\tilde{X}_{E^{(g)}}^T \tilde{X}_{E^{(g)}})^{-1} e_j \\ 0 \end{bmatrix} \hat{b}_{j; E^{(g)}}/\sigma_{j;E^{(g)}}^2 \right\}$$
representing null statistics in \eqref{marginal:law} in a saturated model framework. This map also relies on an affine decomposition into the statistic $\hat{b}_{j; E^{(g)}}$ and null statistics $\mathcal{N}^{(g)}_j$ orthogonal to it.\\

Finally, a decomposition into active (non-zero coordinates) denoted by $E^{(g)}$ and the non-active coordinates that are shrunk to $0$ by the LASSO yields a K.K.T. map in terms of $(\zeta^{(g)}, \hat{b}_{j; E^{(g)}}, o^{(g)})$ as
\begin{equation}
\label{decomp:Lasso}
\bgroup
\def\arraystretch{1.5}\begin{bmatrix}\zeta_{E^(g)}\\ \zeta_{-E^(g)} \end{bmatrix}\egroup = \bgroup
\def\arraystretch{1.5}\begin{bmatrix}A_{E^{(g)},j} \hat{b}_{j; E^{(g)}}+ B_{E^{(g)},j} o_{E^{(g)}}  + c _{E^{(g)},j}\\ A_{-E^{(g)},j} + B_{-E^{(g)},j} o_{E^{(g)}}+ o_{-E^{(g)}}  \end{bmatrix}\egroup.
\end{equation}
Here, $A_{E^{(g)},j}, B_{E^{(g)},j}, c _{E^{(g)},j}$ denote the selected subset of rows corresponding to active coordinates $E^{(g)}$. Similarly, $A_{-E^{(g)},j}, B_{-E^{(g)},j}, c _{-E^{(g)},j}$ represent the subset of rows that are associated with the inactive components or coefficients shrunk to $0$ by the LASSO.
\end{itemize}

\section{Change of variables formulae}
\label{change:variables}
It is easy to see that the maps in \eqref{egene:map} and \eqref{KKT:eVariant} together with the constraints on $(\eta^{(g)}, o^{(g)})$ give rise to half spaces defined in terms of $(\hat{b}_{j; E^{(g)}}, \omega^{(g)}, \zeta^{(g)})$ as selection-induced regions. That is, they yield a polytope
\[\left\{(\hat{b}_{j;E^{(g)}}, \omega^{(g)}, \zeta^{(g)}): (\hat{b}_{j;E^{(g)}}, \omega^{(g)}, \zeta^{(g)}) \in \mathcal{H}\right\}.\]
The reference measure in the selection-adjusted law of $\hat{b}_{j;E^{(g)}}$ is based on computing
 \[\mathcal{P}_{\mathcal{H}}(t) = \mathbb{P}((\hat{b}_{j; E^{(g)}}, \Omega^{(g)}, Z^{(g)}) \in \mathcal{H}\lvert \hat{b}_{j;E^{(g)}}=t),\]
 which is the Gaussian volume of a polytope. In general, a multivariate Gaussian volume of a polyhedral region might be hard to compute. However, we can use the K.K.T. maps associated with identifying eGenes and active variants as change of variables formulae from the space of 
$(\hat{b}_{j; E^{(g)}}, \omega^{(g)}_{j_0}, \zeta^{(g)})$ to $(\hat{b}_{j; E^{(g)}}, \eta^{(g)}, o^{(g)})$. This reduces the calculation of the reference measure to volumes of orthants and cubes, that are geometrically much simpler in nature. 

\begin{lemma}
\label{change:marginal:law:exp}
\emph{\textit{Adjusted law induced by change of measure}:}
Based on change of variables formulae
\[\omega_{j_0}^{(g)} = P^{(g)}_j  \hat{b}_{j; E^{(g)}}  + s_{j_0}\eta^{(g)} + q^{(g)}_j \text{ and }\]
\[\zeta^{(g)}=A_{j}^{(g)}\hat{b}_{j; E^{(g)}} + B_{j}^{(g)}o^{(g)} + c_{j}^{(g)};\]
the marginal selection-adjusted law of $\hat{b}_{j; E^{(g)}}$, conditioned on nuisance statistics $(\mathcal{M}^{(g)}_j, \mathcal{N}^{(g)}_j)$ is proportional to
\[\Scale[0.95]{\exp\left(-\dfrac{(\hat{b}_{j;E^{(g)}} -b_{j;E^{(g)}})^2}{2\sigma_{j;E^{(g)}}^2}\right) \cdot \mathcal{P}_{\mathcal{C}_1; \mathcal{C}_2}(\hat{b}_{j;E^{(g)}})= \exp\left(-\dfrac{(\hat{b}_{j;E^{(g)}} -b_{j;E^{(g)}})^2}{2\sigma_{j;E^{(g)}}^2}\right) \cdot \mathbb{P}\left[\eta^{(g)}\in \mathcal{C}_1, o^{(g)} \in \mathcal{C}_2 \;\lvert \hat{b}_{j; E^{(g)}}\right].}\]
\end{lemma}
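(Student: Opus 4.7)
The plan is to bootstrap off Lemma \ref{marginal:law:exp}, which already expresses the marginal selection-adjusted density of $\hat{b}_{j;E^{(g)}}$ as the Gaussian factor multiplied by a selection probability $\mathcal{P}_{\mathcal{H}(u_j^{(g)})}$. What remains is to rewrite that probability in the $(\eta^{(g)}, o^{(g)})$ coordinates introduced by the K.K.T. maps \eqref{egene:map} and \eqref{KKT:eVariant}, after which the claim reduces to a bookkeeping argument.

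First I would enlarge the conditioning in Lemma \ref{marginal:law:exp} by additionally conditioning on $(\mathcal{M}_j^{(g)}, \mathcal{N}_j^{(g)})$. Under the saturated model \eqref{saturated:model} these are affine functions of $\mathcal{U}_j^{(g)}$ orthogonal to $\hat{b}_{j;E^{(g)}}$, so the exponential-family form in $b_{j;E^{(g)}}$ is preserved, while the intercept vectors $q_j^{(g)}$ and $c_j^{(g)}$ of the K.K.T. maps become fixed constants. Next I would apply the affine bijection
\[
(\omega_{j_0}^{(g)},\, \zeta^{(g)}) \;\longmapsto\; (\eta^{(g)},\, o^{(g)})
\]
obtained by inverting \eqref{egene:map} and \eqref{KKT:eVariant} at the observed value of $\hat{b}_{j;E^{(g)}}$. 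Its Jacobian factors as $|s_{j_0}| \cdot |\det B_j^{(g)}|$, which depends only on the selected set and signs; in particular, it is free of both $\hat{b}_{j;E^{(g)}}$ and the target parameter $b_{j;E^{(g)}}$, and so is absorbed into the proportionality constant.

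Under this change of variables the polyhedral region $\mathcal{H}$ becomes the product region $\mathcal{C}_1 \times \mathcal{C}_2$ of \eqref{sel:constraint:1}--\eqref{sel:constraint:2}. Because the randomizations are independent of the data, conditioning on $\hat{b}_{j;E^{(g)}} = t$ leaves the Gaussian densities of $(\omega_{j_0}^{(g)}, \zeta^{(g)})$ intact, while the remaining components $\omega_{-j_0}^{(g)}$ never enter the selection event and integrate to a constant. Integrating the transformed Gaussian density over $\mathcal{C}_1 \times \mathcal{C}_2$ thus produces exactly
\[
\mathcal{P}_{\mathcal{C}_1; \mathcal{C}_2}(t) \;=\; \mathbb{P}\bigl[\eta^{(g)} \in \mathcal{C}_1,\ o^{(g)} \in \mathcal{C}_2 \,\bigm|\, \hat{b}_{j;E^{(g)}} = t\bigr],
\]
which, substituted back into the expression supplied by Lemma \ref{marginal:law:exp}, yields the stated proportionality.

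The main obstacle I anticipate is verifying that the change of variables is genuinely well defined, i.e.\ invertibility of $B_j^{(g)}$. This should follow from its block-lower-triangular structure combined with the strict positive definiteness of $\tilde{X}_{E^{(g)}}^T \tilde{X}_{E^{(g)}} + \epsilon I$ guaranteed by the ridge term in \eqref{rand:lasso}; that small $\epsilon$ is in fact playing exactly the role needed here. Once invertibility and constancy of the Jacobian are in hand, the remainder of the argument is a direct substitution.
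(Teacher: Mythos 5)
Your proposal is correct and follows essentially the same route as the paper's own proof: treat the K.K.T. maps as an affine change of variables with a constant Jacobian, observe that the selection event becomes the product constraint $\eta^{(g)}\in\mathcal{C}_1,\ o^{(g)}\in\mathcal{C}_2$, and marginalize over $(\eta^{(g)}, o^{(g)})$ to recover $\mathcal{P}_{\mathcal{C}_1;\mathcal{C}_2}(\hat{b}_{j;E^{(g)}})$. The only difference is that you make explicit details the paper leaves implicit---the invertibility of $B_j^{(g)}$ via its block-triangular structure and the ridge term $\epsilon I$, and the disposal of the components $\omega_{-j_0}^{(g)}$ that do not enter the conditioning event---which is a useful elaboration rather than a different argument.
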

\begin{proof}
Conditioning on null statistics together with a change of measure induced by the described selection maps yields a joint law for $(\hat{b}_{j; E^{(g)}}, \eta^{(g)}, o^{(g)})$ as
\begin{equation}
\begin{aligned}
&|J|\cdot\exp\left(-(\hat{b}_{j;E^{(g)}} -b_{j;E^{(g)}})^2/{2\sigma_{j;E^{(g)}}^2}\right)\cdot \exp\left(-{(P^{(g)}_j  \hat{b}_{j; E^{(g)}}  + s_{j_0}\eta^{(g)} + q^{(g)}_j)^2}/{2\gamma^2}\right)\nonumber\\
&\cdot \exp\left(-{\|A_{j}^{(g)}\hat{b}_{j; E^{(g)}} + B_{j}^{(g)}o^{(g)} + c_{j}^{(g)}\|_2^2}/{2\tau^2}\right)\cdot 1_{\left\{\eta^{(g)}\in \mathcal{C}_1, o^{(g)} \in \mathcal{C}_2\right\}}
\end{aligned}
\end{equation}
with $|J|$ being the Jacobian, a constant in our case. It is trivial to see from here that a marginalization over $(\eta^{(g)}, o^{(g)})$ yields the adjusted law with a reference measure 
\[ \mathcal{P}_{\mathcal{C}_1; \mathcal{C}_2}(t)= \mathbb{P}\left[\eta^{(g)}\in \mathcal{C}_1, o^{(g)} \in \mathcal{C}_2 \;\lvert \hat{b}_{j; E^{(g)}}=t\right].\]
\end{proof}

\section{Approximating optimization for affine Gaussian volume}
\label{approx:opt}
Theorem \ref{sel:prob} gives an expression for the probability $\mathbb{P}(\eta^{(g)} \in \mathcal{C}_1,  \; o^{(g)} \in \mathcal{C}_2\lvert \hat{b}_{j; E^{(g)}}  = t) $ as a function of $t$ where $\mathcal{C}_1$ and $\mathcal{C}_2$ represent regions based on the selection constraints in \eqref{sel:constraint:1} and \eqref{sel:constraint:2}. Theorem \ref{Chernoff:bound} derives an upper bound for the same and hence, bounds from the reference measure $ \mathcal{P}_{\mathcal{C}_1; \mathcal{C}_2}(\cdot)$ for convex and compact sets 
$\mathcal{C}_1 \subset \real \text{ and } \mathcal{C}_2\subset \real^{p_g}$.

\begin{theorem}
\label{sel:prob}
\emph{\textit{Reference measure in selection-adjusted law}:}
For the selection regions $\mathcal{C}_1\subset \real$ and $\mathcal{C}_2\subset \real^{p_g}$ in \eqref{sel:constraint:1} and \eqref{sel:constraint:2},
the probability of selection
$\mathbb{P}(\eta^{(g)} \in \mathcal{C}_1,  \; o^{(g)} \in \mathcal{C}_2\lvert \hat{b}_{j; E^{(g)}}  = t) $ equals
\[K \cdot C_1(t) \cdot \int_{o_E}\exp\left(-\|A_{E^{(g)},j} t+ B_{E^{(g)},j} o_{E}  + c _{E^{(g)},j} \|_2^2 /2\tau^2\right)C_2(o_{E}, t )1_{\{\text{sign}(o_E)= s_{E^{(g)}}\}} do_E\]
where
\[C_1(t)  = \bar\Phi(\{L+ s_{j_0}P^{(g)}_j  t + s_{j_0}q^{(g)}_j\}/\gamma) \text{ and }\]
\[C_2(o_{E}, t ) = \prod_{k \notin E^{(g)}} \Big\{\Phi\left(\{A_{-E^{(g)},j}^{k}t + B_{-E^{(g)},j}^{k}o_{E}+ \lambda\}/\tau\right) - \Phi\left(\{A_{-E^{(g)},j}^{k} t + B_{-E^{(g)},j}^{k} o_{E} -  \lambda\} /\tau\right)\Big\}\]
and $K$ is some constant that does not depend on $t$;  $A_{-E^{(g)},j}^{k}$ represents the $k$-th row of $A_{-E^{(g)},j}$ and $B_{-E^{(g)},j}^{k}$ denotes the $k$-th row of $B_{-E^{(g)},j}$; $L$ is the lower threshold in the set of constraints $\mathcal{C}_1$. 
\end{theorem}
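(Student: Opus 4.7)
}
The plan is to exploit three structural facts: (a) the randomizations $\omega^{(g)}$ and $\zeta^{(g)}$ are \emph{independent} of each other (and jointly Gaussian), so conditional on $\hat b_{j;E^{(g)}}=t$ the events $\{\eta^{(g)}\in\mathcal{C}_1\}$ and $\{o^{(g)}\in\mathcal{C}_2\}$ factor; (b) the K.K.T. maps in \eqref{egene:map} and \eqref{KKT:eVariant} are affine in the optimization variables, so changing variables from $(\omega_{j_0}^{(g)},\zeta^{(g)})$ to $(\eta^{(g)},o^{(g)})$ only produces a constant Jacobian that can be absorbed into $K$; and (c) the decomposition in \eqref{decomp:Lasso} shows that $o_{-E^{(g)}}$ enters the density only through a translated quadratic and is constrained to a cube, so it can be integrated out coordinatewise in closed form.

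First, I would carry out the change of variables. Under \eqref{egene:map}, $\omega_{j_0}^{(g)}=P_j^{(g)}t+s_{j_0}\eta^{(g)}+q_j^{(g)}$ at $\hat b_{j;E^{(g)}}=t$, and under \eqref{decomp:Lasso},
\[
\zeta_{E^{(g)}} \;=\; A_{E^{(g)},j}\,t+B_{E^{(g)},j}\,o_{E^{(g)}}+c_{E^{(g)},j},\qquad
\zeta_{-E^{(g)}} \;=\; A_{-E^{(g)},j}\,t+B_{-E^{(g)},j}\,o_{E^{(g)}}+o_{-E^{(g)}}.
\]
Both Jacobians are determinants of triangular matrices with entries that do not depend on $t$, so they contribute a single multiplicative constant. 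Using independence of $\omega^{(g)}\sim\mathcal{N}(0,\gamma^2 I)$ and $\zeta^{(g)}\sim\mathcal{N}(0,\tau^2 I)$, the conditional probability becomes a product of a one-dimensional Gaussian integral in $\eta^{(g)}$ and a $p_g$-dimensional Gaussian integral in $o^{(g)}$.

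For the $\eta^{(g)}$ factor, the constraint is the half-line $\mathcal{C}_1=\{\eta\ge L\}$ with $L=\max(\sqrt{1+\gamma^2}\,\Phi^{-1}(1-K_0 q/(2V_gG)),|T_0^{(g)}|)$. Plugging in the affine expression for $\omega_{j_0}^{(g)}$ and using that $s_{j_0}\in\{\pm1\}$, the Gaussian mass reduces to $\bar\Phi((L+s_{j_0}P_j^{(g)}t+s_{j_0}q_j^{(g)})/\gamma)=C_1(t)$.

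For the $o^{(g)}$ factor, I would split $o^{(g)}=(o_{E^{(g)}},o_{-E^{(g)}})$ and integrate out $o_{-E^{(g)}}$ first. Over the cube $\{\|o_{-E^{(g)}}\|_\infty\le\lambda\}$, the inactive coordinates enter the density only through $\exp\!\bigl(-\|A_{-E^{(g)},j}t+B_{-E^{(g)},j}o_{E^{(g)}}+o_{-E^{(g)}}\|_2^2/2\tau^2\bigr)$, which is a product of independent one-dimensional Gaussians in the components of $o_{-E^{(g)}}$. A coordinatewise substitution $u_k=A_{-E^{(g)},j}^{k}t+B_{-E^{(g)},j}^{k}o_{E^{(g)}}+o_{-E^{(g)},k}$ turns each integral into a difference of standard normal CDFs, and the product over $k\notin E^{(g)}$ is precisely $C_2(o_{E^{(g)}},t)$. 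What remains is an integral of $\exp(-\|A_{E^{(g)},j}t+B_{E^{(g)},j}o_{E^{(g)}}+c_{E^{(g)},j}\|_2^2/2\tau^2)\cdot C_2(o_{E^{(g)}},t)$ over the sign orthant $\{\mathrm{sign}(o_{E^{(g)}})=s_{E^{(g)}}\}$, which is exactly the claimed integral; the $K$ collects the Jacobian, the normalizing $(2\pi\gamma^2)^{-1/2}$ and $(2\pi\tau^2)^{-p_g/2}$ factors, and any conditioning constants independent of $t$.

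The main obstacle is purely bookkeeping: one has to verify that (i) the Jacobian of the two affine change-of-variables maps is truly independent of $t$ (which follows because $A_j^{(g)}$ and $B_j^{(g)}$ depend only on the selected model, not on $t$), and (ii) the decomposition into active and inactive coordinates makes $o_{-E^{(g)}}$ appear only through a translation in the quadratic, which is exactly what \eqref{decomp:Lasso} guarantees. Once these are verified, the factorization into $C_1(t)$ and the integral involving $C_2(o_E,t)$ is immediate.
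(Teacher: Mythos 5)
Your proposal is correct and follows essentially the same route as the paper's own proof: factor the conditional probability using the independence of the two Gaussian randomizations, evaluate the one-dimensional $\eta^{(g)}$ integral over the half-line $\mathcal{C}_1$ in closed form to get $C_1(t)$, and then marginalize the inactive subgradients $o_{-E^{(g)}}$ coordinatewise over the cube (exploiting the identity block in $B_j^{(g)}$) to produce $C_2(o_{E},t)$, leaving the integral over the sign orthant. The only organizational difference is that the paper performs the affine change of variables and absorbs the constant Jacobian in a separate lemma (Lemma \ref{change:marginal:law:exp}) rather than inside this proof, which does not change the substance of the argument.
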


\begin{proof}
We use notations $K_1, K_2, \cdots $ to denote constants in our proof; these dissolve as constants and do not impact the computation of the reference measure as a function of $\hat{b}_{j; E^{(g)}}$.
\begin{equation*}
\begin{aligned}
\mathbb{P}(\eta^{(g)} \in \mathcal{C}_1,  \; o^{(g)} \in \mathcal{C}_2\lvert \hat{b}_{j; E^{(g)}}  = t) &= K_1\cdot\int_{o}\int_{\eta}  \exp\left(-{(P^{(g)}_j  t + s_{j_0}\eta + q^{(g)}_j)^2}/{2\gamma^2}\right)1_{\{\eta \in \mathcal{C}_1\}}d\eta\\
&\;\;\;\;\;\;\;\;\;\cdot   \exp\left(-{\|A_{j}^{(g)}t + B_{j}^{(g)}o^{(g)} + c_{j}^{(g)}\|_2^2}/{2\tau^2}\right) 1_{\{o \in \mathcal{C}_2\}}do\\
&= K_1\cdot\int_{\eta}  \exp\left(-{(P^{(g)}_j  t   + s_{j_0}\eta + q^{(g)}_j)^2}/{2\gamma^2}\right)1_{\{\eta \in \mathcal{C}_1\}}d\eta\\
&\;\;\;\;\;\;\;\cdot \int_{o}\exp\left(-{\|A_{j}^{(g)}t + B_{j}^{(g)}o + c_{j}^{(g)}\|_2^2}/{2\tau^2}\right) 1_{\{o \in \mathcal{C}_2\}}do\\
\end{aligned}
\end{equation*}
Denoting the lower threshold in the set of constraints $\mathcal{C}_1$ as $L$, we have
\begin{align} 
\label{cube:1}
&\int_{\eta}  \exp\left(-{(P^{(g)}_j  t  + s_{j_0}\eta + q^{(g)}_j)^2}/{2\gamma^2}\right)1_{\{\eta \in \mathcal{C}_1\}}d\eta \nonumber\\
&= K_2\cdot \bar\Phi(\{L+ s_{j_0}P^{(g)}_j  t+ s_{j_0}q^{(g)}_j\}/\gamma)= K_2 \cdot C_1(\hat{\beta}_{j; E^{(g)}})
\end{align}
In the computation of 
\[\int_{o}\exp\left(-{\|A_{j}^{(g)}t + B_{j}^{(g)}o^{(g)} + c_{j}^{(g)}\|_2^2}/{2\tau^2}\right) 1_{\{o \in \mathcal{C}_2\}}do,\]
we marginalize over the inactive sub-gradient variables first. That is, this integral equals
\begin{equation}
\begin{aligned}
&\int_{o_E}\exp\left(-\|A_{E^{(g)},j} t+ B_{E^{(g)},j} o_{E}  + c _{E^{(g)},j} \|_2^2 /2\tau^2\right)1_{\{\text{sign}(o_E)= s_{E^{(g)}}\}}\nonumber\\
&\;\;\cdot\int_{o_{-E}}\exp\left(-\|A_{-E^{(g)},j} t + B_{-E^{(g)},j} o_{E}+ o_{-E}\|_2^2/2\tau^2\right) 1_{\{\|o_{-E}\|_{\infty} \leq \lambda\} } do_{-E} d o_{E}\nonumber \\
&= K_3 \cdot \int_{o_E}\exp\left(-\|A_{E^{(g)},j} t + B_{E^{(g)},j} o_{E}  + c _{E^{(g)},j} \|_2^2 /2\tau^2\right)1_{\{\text{sign}(o_E)= s_{E^{(g)}}\}}C_2(o_{E}, \hat{b}_{j; E^{(g)}} ) do_E.
\end{aligned}
\end{equation}
where $C_2(o_{E}, \hat{b}_{j; E^{(g)}} )$ equals
\begin{align}
\label{cube:lasso}
\prod_{k \notin E^{(g)}} \Big\{\Phi\left(\{A_{-E^{(g)},j}^{k}t + B_{-E^{(g)},j}^{k} o_{E}+ \lambda\}/\tau\right) - \Phi\left(\{A_{-E^{(g)},j}^{k}t + B_{-E^{(g)},j} ^{k}o_{E} -  \lambda\} /\tau\right)\Big\}.
\end{align}
The above is a consequence of the separability of the inactive sub-gradient equation and separability of the cube region induced by the same. Modulo a constant $K$, we can thereby write $\mathbb{P}(\eta^{(g)} \in \mathcal{C}_1,  \; o^{(g)} \in \mathcal{C}_2\lvert \hat{b}_{j; E^{(g)}}  = t)$ as 
\[ C_1(\hat{\beta}_{j; E^{(g)}}) \cdot \int_{o_E}\exp\left(-\|A_{E^{(g)},j} \hat{b}_{j; E^{(g)}}+ B_{E^{(g)},j} o_{E}  + c _{E^{(g)},j} \|_2^2 /2\tau^2\right)C_2(o_{E}, \hat{b}_{j; E^{(g)}} )1_{\{\text{sign}(o_E)= s_{E^{(g)}}\}} do_E.\]

\end{proof}

\begin{theorem}
\label{Chernoff:bound}
\emph{\textit{An upper bound on reference}:}
For convex and compact sets $\mathcal{C}_1 \subset \real, \mathcal{C}_2\subset \real^{p_g}$, an upper bound for $\log \mathbb{P}(\eta^{(g)} \in \mathcal{C}_1,  o^{(g)} \in \mathcal{C}_2\lvert \hat{b}_{j; E^{(g)}}  = t)$ is
\begin{equation*}
\log K + \log C_1(t)-\inf_{\text{sign}(o_{E}) = s_{E^{(g)}}}\left\{ \| A_{E^{(g)}, j}  t + B_{E^{(g)}, j} o_E + c_{E^{(g)}, j} \|_2^2/2\tau^2- \log C_2(o_{E}, t).\right\} 
\end{equation*}
where $K$ is the same constant in the conditional probability in Theorem \ref{sel:prob}.
\end{theorem}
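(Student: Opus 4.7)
The plan is to start from the exact formula in Theorem~\ref{sel:prob}, which expresses $\mathcal{P}_{\mathcal{C}_1;\mathcal{C}_2}(t) = \mathbb{P}(\eta^{(g)} \in \mathcal{C}_1, o^{(g)}\in \mathcal{C}_2 \mid \hat{b}_{j;E^{(g)}}=t)$ as $K \cdot C_1(t)$ multiplied by an integral over $o_E$ of a non-negative integrand. I would collect the integrand into a single exponential form:
\begin{equation*}
\exp(-g(o_E,t)), \qquad g(o_E,t) := \frac{\|A_{E^{(g)},j}t + B_{E^{(g)},j}o_E + c_{E^{(g)},j}\|_2^2}{2\tau^2} - \log C_2(o_E, t),
\end{equation*}
restricted to the orthant $\{\text{sign}(o_E) = s_{E^{(g)}}\}$, so that the goal is to bound $\int \exp(-g(o_E,t))\, do_E$ from above.

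The next step is the key inequality. For any non-negative measurable $h$ and any measurable set $A$ of finite Lebesgue measure, $\int_A h \leq |A|\cdot \sup_A h$. Applying this to $h=\exp(-g(\cdot,t))$ on the feasible orthant intersected with the assumed compact set $\mathcal{C}_2$ gives
\begin{equation*}
\int \exp(-g(o_E,t))\, 1_{\{\text{sign}(o_E)=s_{E^{(g)}}\}}\, do_E \;\leq\; |\mathcal{C}_2|\cdot \exp\!\Bigl(-\inf_{\text{sign}(o_E)=s_{E^{(g)}}} g(o_E,t)\Bigr).
\end{equation*}
Taking logarithms and absorbing the $t$-independent Lebesgue measure $|\mathcal{C}_2|$ into the constant $K$ from Theorem~\ref{sel:prob} yields exactly the stated bound. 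Convexity of $\mathcal{C}_1$ and $\mathcal{C}_2$ is not used for the inequality itself, but it matters for the utility of the bound, since under convexity $g(\cdot,t)$ is convex (the quadratic is convex and $-\log C_2(\cdot,t)$ is a sum of log-concave Gaussian-probability terms) and the infimum defines a genuine convex program that is tractable to solve in practice.

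The main obstacle is the mismatch between the compactness hypothesis stated in the theorem and the actual selection region $\mathcal{C}_2 = \{o:\text{sign}(o_E)=s_{E^{(g)}},\,\|o_{-E}\|_\infty \leq \lambda\}$ used elsewhere, which is not compact in the $o_E$ direction. I would handle this by observing that the ridge term $(\epsilon/2)\|\beta\|_2^2$ in the randomized LASSO \eqref{rand:lasso} makes $B_{E^{(g)},j}$ invertible on its active block, so $g(\cdot,t)$ is strongly coercive in $o_E$; hence one may pass to a sufficiently large compact sub-level set without changing the integral more than by a $t$-independent factor, which again gets absorbed into $K$. Equivalently, one can view the compactness hypothesis as an abstraction that, in the paper's application, is implemented via the barrier penalty $\mathcal{B}(o_E)$ in \eqref{approx:ref}, which enforces sign feasibility and renders the $o_E$-minimization well posed. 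With this caveat, the proof reduces to the elementary supremum-over-region bound on the integrand together with bookkeeping of which factors depend on $t$ and which are absorbed into $K$.
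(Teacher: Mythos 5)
Your argument takes a genuinely different route from the paper's, and as written it has a gap. The paper does not bound the integral by ``measure of the region times supremum of the integrand.'' Instead it rewrites $K\,C_1(t)\int e^{-Q}C_2\,\mathbf{1}\,do_E$ as $\log K+\log C_1(t)+\log\mathbb{E}\bigl[C_2(o_{E^{(g)}},t)\mathbf{1}_{\{\mathrm{sign}(o_{E^{(g)}})=s_E\}}\mid\hat b_{j;E^{(g)}}=t\bigr]$, where the expectation is under the \emph{Gaussian probability law} of $o_{E^{(g)}}$ induced by the change of variables, and then runs a Chernoff/Legendre-duality argument: tilt by $\exp(\beta^T o_E)$, take an infimum over $\beta$, swap $\sup_\beta\inf_{o_E}$ to $\inf_{o_E}\sup_\beta$ (this is precisely where the convexity and compactness hypotheses are invoked, via a minimax equality --- not, as you suggest, merely for ``utility of the bound''), and identify $\sup_\beta\{\beta^T o_E-\log\mathbb{E}[e^{\beta^T o_{E^{(g)}}}]\}$ as the conjugate of the Gaussian log-MGF, which equals $\|A_{E^{(g)},j}t+B_{E^{(g)},j}o_E+c_{E^{(g)},j}\|_2^2/2\tau^2$. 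Because the reference measure is a probability measure, no volume factor ever appears.

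The concrete problem with your key inequality $\int_A h\le|A|\sup_A h$ is that the actual feasible set in the $o_E$ direction is the open orthant $\{\mathrm{sign}(o_E)=s_{E^{(g)}}\}$, which has infinite Lebesgue measure, so the inequality is vacuous as applied; and your patch --- restricting to a compact sublevel set and asserting the tail costs only a $t$-independent factor --- is exactly the nontrivial part and is not established. (Note also that even on a bounded set your bound carries the extra additive term $\log|A|$, whereas the theorem asserts the bound with the \emph{same} constant $K$ as in Theorem \ref{sel:prob}.) Your route can be repaired, but it requires a real argument: $C_2(\cdot,t)$ is log-concave in $o_E$ (Pr\'{e}kopa), so $g(\cdot,t)$ is strongly convex with Hessian bounded below by $B_{E^{(g)},j}^TB_{E^{(g)},j}/\tau^2\succ 0$ (using the $\epsilon$-ridge), and first-order optimality of the constrained minimizer on the convex orthant gives $g(o_E,t)\ge\inf g+\tfrac{1}{2\tau^2}\|B_{E^{(g)},j}(o_E-o_E^*)\|_2^2$ on the orthant, whence the integral is at most $e^{-\inf g}$ times a $t$-independent Gaussian normalizing constant. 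None of this is in your write-up, so as it stands the proof does not go through; either supply that strong-convexity argument or adopt the paper's tilting/minimax derivation.
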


\begin{proof}
Based on Lemma \ref{sel:prob}, we have 
\begin{equation*}
\begin{aligned}
&\log\mathbb{P}(\eta^{(g)} \in \mathcal{C}_1,  \; o^{(g)} \in \mathcal{C}_2\lvert \hat{b}_{j; E^{(g)}}  = t) \\
&= \log K + \log C_1(t) + \log\mathbb{E}\left[ C_2(o_{E^{(g)}}, \hat{b}_{j; E^{(g)}})1_{\{\text{sign}(o_{E^{(g)}}) = s_E\}}\lvert \hat{b}_{j; E^{(g)}}  = t\right].
\end{aligned}
\end{equation*}
Ignoring the constant above,
\begin{equation*}
\begin{aligned}
&\log C_1(t) + \log\mathbb{E}\left[ C_2(o_{E^{(g)}}, \hat{b}_{j; E^{(g)}} )1_{\{\text{sign}(o_{E^{(g)}}) = s_E\}}\Big\lvert \hat{b}_{j; E^{(g)}}  = t\right]\\
&\leq \log C_1(t)+ \log\mathbb{E}\left[\exp\left\{\sup\limits_{\text{sign}(o_{E}) = s_E}\log C_2(o_{E}, t)-\beta^T o_{E}\right\}\exp(\beta^T o_{E^{(g)}})\Big\lvert \hat{b}_{j; E^{(g)}}  = t\right]\\
&\leq \log C_1(t)-\sup_{ \beta \in \real^{E^{(g)}}}\inf_{\text{sign}(o_{E}) = s_E}\Big\{\beta^T o_{E}-\log C_2(o_E, t) \Big\}- \log\mathbb{E}\left[\exp(\beta^T o_{E^{(g)}})\lvert \hat{b}_{j; E^{(g)}}  = t\right]\\
&= \log C_1(t)-\inf_{\text{sign}(o_{E}) = s_E} \sup_{ \beta \in \real^{E^{(g)}}}\Big\{\beta^T o_{E}- \log\mathbb{E}\left[\exp(\beta^T o_{E^{(g)}})\lvert \hat{b}_{j; E^{(g)}}  = t\right]\Big\}-\log C_2(o_E, t).
\end{aligned}
\end{equation*}
The last step follows by a minimax equality for convex and compact sets $\mathcal{C}_1, \mathcal{C}_2$.

\noindent Finally, using the fact that \[\sup\limits_{\beta \in \real^{E^{(g)}}}\left\{\beta^{T} o_{E}- \log\mathbb{E}\left[\exp(\beta o_E)\lvert \hat{b}_{j; E^{(g)}}  = t\right]\right\}\] is the conjugate of logarithm of the moment generating functions of Gaussian random variable $o_{E^{(g)}}$ and equals 
\[\| A_{E^{(g)}, j}  t + B_{E^{(g)}, j} o_E + c_{E^{(g)}, j} \|_2^2/2\tau^2,\] we have the stated bound.
\end{proof}

We use a smooth version of the above objective in \ref{Chernoff:bound} through a barrier penalty. The final optimization that we solve to approximate the reference is given by
\[\log C_1(t) -\inf_{o_{E}}\left\{ \| A_{E^{(g)}, j}  t + B_{E^{(g)}, j} o_E + c_{E^{(g)}, j} \|_2^2/2\tau^2- \log C_2(o_{E}, t) + \mathcal{B}(o_E)\right\}\]
with $\mathcal{B}(\cdot)$ being a barrier penalty that reflects the sign constraints.

Let the approximate reference measure be $\hat{\mathcal{P}}_{\mathcal{H}}(\hat{b}_{j;E^{(g)}})$. The resulting pseudo selection-adjusted law for $b_{j;E^{(g)}}$  is proportional to
\begin{equation*}
\label{approximate:law}
\exp\left(-{(\hat{b}_{j;E^{(g)}} -b_{j;E^{(g)}})^2}/{2\sigma_{j;E^{(g)}}^2}\right) \cdot \hat{\mathcal{P}}_{\mathcal{C}_1;\mathcal{C}_2 }(\hat{b}_{j;E^{(g)}}).
\end{equation*}
In fact, a pivot based on the approximate law is calculated as
\[ \hat{p}(\hat{b}_{j;E^{(g)}}; b, \sigma_{j; E^{(g)}}^2) =\dfrac{ \sum_{t\in \chi: \geq \hat{b}_{j;E^{(g)}}} \exp\left(-{(t -b_{j;E^{(g)}})^2}/{2\sigma_{j;E^{(g)}}^2}\right) \cdot \hat{\mathcal{P}}_{\mathcal{C}_1;\mathcal{C}_2 }(t)}{ \sum_{t'\in \chi} \exp\left(-{(t' -b_{j;E^{(g)}})^2}/{2\sigma_{j;E^{(g)}}^2}\right) \cdot \hat{\mathcal{P}}_{\mathcal{C}_1;\mathcal{C}_2 }(t')}\]
for $\chi$ being an appropriate grid on the real line. 
As pointed out in the intractability of the MLE problem, solving for the selection-adjusted MLE based on \eqref{approximate:law} requires the log-partition function as a function of the target parameter $b_{j;E^{(g)}}$. 
Based on an approximated partition function on a grid $\chi\subset \real$,  the MLE is the solution of the optimization
\begin{equation}
\label{approx:MLE:problem}
\underset{{b_{j;E^{(g)}} \in \real}}{\text{minimize}}(\hat{b}_{j;E^{(g)}} -b_{j;E^{(g)}})^2/{2\sigma_{j;E^{(g)}}^2}  + \log \sum_{t\in \chi} \exp\left(-{(t -b_{j;E^{(g)}})_2^2}/{2\sigma_{j;E^{(g)}}^2}\right)\cdot \hat{\mathcal{P}}_{\mathcal{C}_1;\mathcal{C}_2 }(t).
\end{equation}

\section{Data Processing for GTEx {\it cis}-eQTL Data}
\label{dataprocessing}

The variant calls, which had been pre-processed through GTEx-specific quality control and variant imputation pipelines, were downloaded from dbGaP (accession phs000424.v6.p1),. We followed the processing pipeline for {\it cis}-eQTL according to \cite{GTEx17}: (1) the search for variants was limited to $\pm$1 Mb of the transcription start site of each gene and variants were removed if they occurred in less than 10 samples of had minor allele frequency less than 0.01; (2) a total of 20 covariates, including technical confounders, population background, gender and genotyping platform, were regressed out from the genotypes (i.e., variant matrices $X^{(g)}$)) as well as phenotypes (i.e., gene expressions $y^{(g)}$). We modified the original software fastQTL (\cite{ongen2015fast}) to implement these filters and corrections and obtain the final gene-specific genotype matrices. The processed gene expression data was also downloaded from dbGaP, and we regressed out the same 20 covariates as we did for the genotype data for the real $cis$-eQTL data analysis.

\end{document}